\documentclass[11pt, fleqn]{article}

\usepackage[letterpaper, margin=1in]{geometry}
\usepackage{graphicx}
\usepackage{authblk}

\usepackage{hyperref} 
\usepackage{cite}
\usepackage{amsmath,amssymb,amsthm,mathtools,mathrsfs,bm,booktabs}
\usepackage{fullpage}
\usepackage{enumitem}
\usepackage{algorithm}
\usepackage{algorithmic}
\usepackage{xcolor}
\usepackage{cleveref,autonum}
\usepackage{tikz}
\tikzset{
  dot/.style = {circle, fill, minimum size=#1,inner sep=0pt, outer sep=0pt},
  dot/.default = 3pt,
}

\usepackage{version}
\excludeversion{conference}\includeversion{fullpaper}\usepackage[appendix=inline]{apxproof}\renewenvironment{appendixproof}[1][]{\proof}{\endproof}
\renewcommand{\appendixsectionformat}[2]{}

\newcommand{\bA}{\bm{A}}

\newcommand{\argmax}{\mathop{\rm arg\,max}}
\newcommand{\balpha}{\bm{\alpha}}
\newcommand{\bx}{\bm{x}}
\newcommand{\bp}{\bm{p}}
\newcommand{\bq}{\bm{q}}
\newcommand{\dcup}{\mathbin{\dot{\cup}}}
\renewcommand{\epsilon}{\varepsilon}

\newtheorem{theorem}{Theorem} 
\newtheorem{lemma}{Lemma}
\newtheorem{proposition}{Proposition}
\newtheorem{corollary}{Corollary}
\newtheorem{definition}{Definition}
\newtheorem{example}{Example}

\title{Fair and Efficient Balanced Allocation for Indivisible Goods}
\author[1,2]{Yasushi Kawase}
\author[1,2]{Ryoga Mahara}
\affil[1]{The University of Tokyo}
\affil[2]{RIKEN Center for Advanced Intelligence Project}
\affil[ ]{\texttt{kawase@mist.i.u-tokyo.ac.jp, mahara@mist.i.u-tokyo.ac.jp}}
\date{}

\begin{document}

\maketitle

\begin{abstract}
We study the problem of allocating indivisible goods among agents with additive valuation functions to achieve both fairness and efficiency under the constraint that each agent receives exactly the same number of goods (the \emph{balanced constraint}).
While this constraint is common in real-world scenarios such as team drafts or asset division, it significantly complicates the search for allocations that are both fair and efficient.
Envy-freeness up to one good (EF1) is a well-established fairness notion for indivisible goods.
Pareto optimality (PO) and its stronger variant, fractional Pareto optimality (fPO), are widely accepted efficiency criteria.
Our main contribution establishes both the existence and polynomial-time computability of allocations that are simultaneously EF1 and fPO under balanced constraints in two fundamental cases: (1) when each agent has a personalized bivalued valuation, and (2) when agents have at most two distinct valuation types,.
Our algorithms leverage novel applications of maximum-weight matching in bipartite graphs and duality theory, providing the first polynomial-time solutions for these cases and offering new insights for constrained fair division problems.
\end{abstract}

\section{Introduction}
The fair division of \emph{indivisible} goods is an important problem that has been widely studied in mathematics, economics, and computer science~\cite{brams1996fair,brandt2016handbook}.
In recent years, this topic has attracted even more interest (see surveys~\cite{amanatidis2023fair,aziz2022algorithmic,guo2023survey}).
Many previous studies have discussed fairness and efficiency without allocation constraints.
However, in real-world problems, it is often necessary to make allocations under various constraints.
Motivated by this, recent works have studied fair division problems under a variety of constraints (see the survey~\cite{suksompong2021constraints}).

For instance, in team sports drafts, new players are assigned to teams, and it is typical for each team to receive the same number of new players to prevent any team from having a numerical advantage.
Another example is the division of indivisible assets, such as inherited jewelry or artwork, among family members. Siblings often agree to take the same number of items, but differences in market value or personal significance may still lead to envy. These examples highlight the importance of considering allocations that maintain balanced quantities while also satisfying fairness and efficiency.

In this paper, we study the problem of finding fair and efficient allocations under the constraint that each agent receives the same number of goods (the \emph{balanced constraint}), assuming that the total number of goods is a multiple of the number of agents.
We also assume that each agent has an additive valuation function.

To formalize notions of fairness in such settings, several criteria have been introduced in the literature.
\emph{Envy-freeness} (EF)~\cite{foley1966resource} is one of the most fundamental fairness concepts.
It requires that no agent prefers someone else's bundle over their own.
However, with indivisible items, achieving EF can be impossible even in very simple scenarios, such as when there are only two agents and a single item.
To address this issue, various relaxed notions of fairness have been proposed.
The most prominent among these is \emph{envy-freeness up to one item} (EF1) introduced by Budish~\cite{budish2011combinatorial}.
EF1 requires that each agent prefers their own bundle to that of any other agent after removing at most one item from the other agent's bundle.
It is known that an EF1 allocation always exists and can be computed in polynomial time~\cite{lipton2004approximately}.
Moreover, even under the balanced constraint, an EF1 allocation can be found using the round-robin algorithm~\cite{caragiannis2019unreasonable}.

On the efficiency side, \emph{Pareto optimality} (PO) is a standard benchmark.
An allocation is PO if no agent can be made strictly better off without making someone else worse off.
A stronger concept, known as \emph{fractional Pareto optimality} (fPO), extends this notion to fractional allocations.
It is known that maximizing utilitarian welfare yields an fPO allocation, and thus also a PO allocation. Consequently, such an allocation always exists and can be computed in polynomial time.

While finding allocations that are either EF1 or PO individually is relatively straightforward, the problem becomes considerably more complex when both fairness and efficiency must be achieved at the same time. 
In what follows, we examine the main challenges in satisfying both criteria simultaneously and review previous progress on addressing this issue.

For the unconstrained case, Caragiannis et al.~\cite{caragiannis2019unreasonable} established the novel result that maximizing the Nash social welfare~\cite{Nash50, kaneko1979nash}, defined as the geometric mean of the agents' valuations, yields an allocation that is both EF1 and PO (but may not be fPO). However, since maximizing Nash social welfare is NP-hard~\cite{nguyen2014computational} and even APX-hard~\cite{lee2017apx}, this approach does not directly yield an efficient algorithm.
Barman et al.~\cite{barman2018finding} addressed this computational limitation by proposing a pseudo-polynomial time algorithm that computes an EF1 and PO allocation and proved the existence of an EF1 and fPO allocation.
Subsequently, Mahara~\cite{mahara2024polynomial} developed a polynomial-time algorithm to compute an EF1 and fPO allocation when the number of agents is constant.
The existence of a polynomial-time algorithm for finding an EF1 and PO (or fPO) allocation in general case remains an important open question.

There is growing interest in fair and efficient allocation under various practical constraints that extend beyond the unconstrained setting.
When such constraints are present, achieving both EF1 and PO\footnotemark{} becomes more challenging than in the unconstrained case. 
A natural first approach is to maximize Nash social welfare among allocations that satisfy the constraints, but in general, such an approach does not lead to an EF1 allocation.
Furthermore, under general constraints, there may not exist any allocation that is both EF1 and PO~\cite{wu2021budget,cookson2025constrained}.
To the best of our knowledge, the only existing result guaranteeing the existence of an allocation that is both EF1 and PO under constraints is due to \cite{shoshan2023efficient}, which studies category constraints and gives a polynomial-time algorithm for the case of two agents.

\footnotetext{In the constrained case, we consider PO and fPO only with respect to feasible allocations (i.e., constrained PO and constrained fPO). It is not difficult to see that PO (or fPO) allocations with respect to all allocations may not exist. For example, suppose there are two agents and two goods, and one agent values both goods positively while the other agent assigns zero value to both. Then, any PO allocation with respect to all allocations must assign both goods to the agent who values them, but such an allocation is not balanced.}

\subsection*{Our Results}
In this paper, we study the problem of fair division of indivisible goods among agents with additive valuation functions under the \emph{balanced constraints}. 

We first observe that maximizing Nash social welfare over balanced allocations does not necessarily guarantee EF1 (see \Cref{ex:simple}).
We then give a characterization that an allocation is fPO if and only if it maximizes a weighted utilitarian social welfare under balanced constraints (\Cref{prop:fPO}).
We also show that checking whether a given balanced allocation satisfies fPO and EF1 is polynomial-time solvable, whereas checking PO is coNP-complete. 
Additionally, we provide a reduction from the problem of finding an EF1 and fPO allocation under unconstrained case to balanced constraint case.

We establish both the existence and polynomial-time computability of balanced allocations that are simultaneously EF1 and fPO in two important cases: when each agent has a personalized bivalued valuation, i.e., each agent $i$ assigns two distinct values $a_i,b_i~(a_i>b_i\ge 0)$ to the goods, and when there are at most two types of agents. 

In the personalized bivalued case, we show that maximizing the weighted utilitarian social welfare with appropriately chosen weights yields an EF1 and fPO allocation.

In the two types case, we provide an algorithm that maintains primal and dual optimal solutions for the linear program that maximizes weighted social welfare. The algorithm searches for an EF1 allocation by gradually changing the weights. 
In this procedure, we exploit a characterization of dual optimal solutions for maximum-weight perfect matchings in bipartite graphs using the shortest path distances. It is worth mentioning that our result for the two types case implies the result for the unconstrained case.

It is worth noting that the balanced constraints are a special case of the category constraints studied in~\cite{shoshan2023efficient,igarashi2025fair} and the matroid constraints considered in~\cite{KS2020,kawase2023random,cookson2025constrained}.
Moreover, both cases analyzed in this paper have been widely studied in the recent fair division literature (see Related Work).

\begin{conference}
Due to space limitations, we omit most of the proofs, which can be found in the full paper.
\end{conference}

\subsection*{Related Work}
\paragraph{Fair and efficient allocation for divisible items}
The Fisher market framework, originally introduced by Irving Fisher (see~\cite{brainard2005compute}), has long been a central object of study in both economics and computer science.
This model with divisible goods is well known for exhibiting strong notions of fairness and efficiency.
In particular, Varian\cite{varian1974equity} showed that when all agents have equal budgets, the equilibrium allocation achieved in a Fisher market is both EF and PO.
It is established that such market equilibria can be computed in polynomial time under additive valuation functions~\cite{devanur2008market, orlin2010improved, vegh2012strongly}.
For the constrained case, an EF and PO allocation still exists if each agent has an identical constraint~\cite{cole2021existence}, but finding it is PPAD-complete even in the balanced setting where each agent receives exactly one good~\cite{trobst2024cardinal,caragiannis2024complexity}.
Moreover, there exists an allocation that is both SD-EF (which is a stronger notion than EF and balancedness) and ordinally efficient (which is slightly weaker notion than PO)~\cite{kojima2009}; however, such an allocation may not exist even under category constraint~\cite{kawase2023random}.

\paragraph{Fair and efficient allocation for indivisible chores}
For indivisible chores, the existence of EF1 and fPO (or PO) allocations was previously known for a few restricted cases:
instances with two agents~\cite{aziz2022fair};
instances with bivalued valuations for each agent~\cite{ebadian2022fairly, garg2022fair};
instances with two types of chores~\cite{Aziz2023};
instances with three agents~\cite{garg2023new};
and instances with three types of valuation functions~\cite{garg2024weighted}.
In a significant breakthrough, Mahara~\cite{mahara2025existence} recently showed the existence of EF1 and fPO allocations for general additive valuations.

\paragraph{Fair and efficient allocation under constraints}

Shoshan et al.~\cite{shoshan2023efficient} proposed a polynomial-time algorithm for fair division under category constraints with two agents, which finds an EF1 and PO allocation when each category consists of only goods or only chores. If goods and chores are mixed, the algorithm finds an EF[1,1] (envy-free up to one good and one chore) and PO allocation.
Igarashi and  Meunier~\cite{igarashi2025fair} extended this result to general settings with $n$ agents, proving the existence of a PO allocation in which each agent can be made envy-free by reallocating at most $n(n-1)$ items.

For budget constraints, Wu et al.~\cite{wu2021budget} showed that any budget-feasible allocation that maximizes the Nash social welfare achieves a $1/4$-EF1 and PO allocation for goods.
They also showed that there exists an instance in which there is no $(1/2+\epsilon)$-EF1 and PO allocation for any $\epsilon > 0$.
Here, $\alpha$-EF1 is an approximate relaxation of EF1, which requires that each agent prefers their own bundle to $\alpha$ times the bundle of any other agent after the removal of at most one item from that other agent's bundle.
Cookson et al.~\cite{cookson2025constrained} investigated fair division under matroid constraints, showing that maximizing the Nash social welfare yields a $1/2$-EF1 and PO allocation for goods.

For a broader overview of fair division under various constraints, see the survey by Suksompong~\cite{suksompong2021constraints}.



\paragraph{Bivalued and two-type instances}
For bivalued instances\footnote{In a bivalued instance, each agent $i$ assigns one of two fixed values, $a > b > 0$, to the goods. 
This setting is a special case of the personalized bivalued instance, where each agent may assign their own pair of distinct values.},
Amanatidis et al.~\cite{amanatidis2021} showed that an EFX allocation can be computed in polynomial time for goods.
Here, EFX (envy-freeness up to any good) is a stronger fairness concept than EF1.
Garg and Murhekar~\cite{garg2023computing} further showed that both EFX and PO allocations can be computed in polynomial time for goods. In the context of chores, Garg et al.~\cite{garg2022fair} and Ebadian et al.~\cite{ebadian2022fairly} proved that EF1 and PO allocations can be computed in polynomial time. Additionally, Garg et al.~\cite{garg2023new} showed that EFX and PO allocations can be computed in polynomial time for chores when there are three agents. 
Akrami et al.~\cite{akrami2022maximizing} established that maximizing the Nash social welfare is polynomial-time computable when the two values are multiples of each other, but becomes NP-hard when they are coprime.

For instances with two types of valuations, Mahara~\cite{mahara2023existence,mahara2024extension} showed that an EFX allocation always exists for goods under additive valuations, and even under more general monotone valuations. In addition, Garg et al.~\cite{garg2023new} showed that EF1 and PO allocations can be computed in polynomial time for both goods and chores.

\section{Model}
For each natural number $\ell$, we denote $[\ell]=\{1,\ldots,\ell\}$.

An instance of our problem is a tuple $(N,M,(v_i)_{i\in N})$, where
$N=[n]$ represents the (non-empty) set of agents and $M=[m]$ represents the (non-empty) set of \emph{indivisible} goods.
Each agent $i$ has a \emph{valuation function}, denoted as $v_i\colon M \to \mathbb{R}_+$, where $\mathbb{R}_+$ represents the set of nonnegative real numbers.
We assume that each agent's valuation is \emph{additive}, and write $v_i(X)\coloneqq \sum_{j\in X}v_{ij}$ to denote the utility of agent $i$ when $i$ receives a subset of goods $X\subseteq M$.
A valuation function $v_i$ is called \emph{personalized bivalued} if there exist $a_i,b_i\in\mathbb{R}_+$ with $a_i>b_i$ such that $v_{ij}\in\{a_i,b_i\}$ for all $j\in M$.

Throughout this paper, we assume that $m$ is a multiple of $n$, and define $k\coloneqq m/n$.
An (integral) \emph{allocation} is an ordered partition $\bA=(A_1,\dots,A_n)$ of $M$, i.e., $\bigcup_{i\in N}A_i=M$ and $A_i\cap A_{i'}=\emptyset$ for any distinct $i,i'\in N$.
Each $A_i$ is the \emph{bundle} allocated to agent $i$.
An allocation $\bA$ is called \emph{balanced} if $|A_i|=k$ for all $i\in N$.
We sometimes represent a balanced allocation $\bA$ by a matrix $x\in \{0,1\}^{N\times M}$, where $x_{ij}=1$ if good $j$ is allocated to agent $i$ (i.e., $j \in A_i$), and $x_{ij}=0$ otherwise.
A \emph{balanced fractional allocation} is a matrix $x\in \mathbb{R}_+^{N\times M}$ such that $\sum_{j\in M}x_{ij}=k$ for all $i\in N$ and $\sum_{i\in N}x_{ij}=1$ for all $j\in M$.
By the property of the total unimodularity, the set of balanced fractional allocations forms the convex full of the balanced (integral) allocations~(see, e.g., Schrijver~\cite[Sec. 21.2]{schrijver2003}).
Thus, a balanced fractional allocation can be interpreted as a lottery over balanced allocations.

An allocation $\bA$ is called \emph{envy-free up to one good (EF1)} if, for all $i,i'\in N$, either $A_{i'} = \emptyset$, or there exists a good $j\in A_{i'}$ such that $v_i(A_i)\ge v_i(A_{i'}\setminus\{j\})$.
A balanced (fractional) allocation $x\in \mathbb{R}_+^{N\times M}$ \emph{Pareto-dominates} a balanced (fractional) allocation $x'\in\mathbb{R}_+^{N\times M}$ if 
(i) $\sum_{j\in M}v_{ij}x'_{ij}\le \sum_{j\in M}v_{ij}x_{ij}$ for all $i\in N$, and 
(ii) $\sum_{j\in M}v_{ij}x'_{ij}< \sum_{j\in M}v_{ij}x_{ij}$ for some $i\in N$.
A balanced allocation is called \emph{Pareto optimal (PO)} or \emph{fractionally Pareto optimal (fPO)} if it cannot be Pareto-dominated by any other balanced integral or fractional allocation, respectively.

Observe that the set of achievable valuation vectors $(\sum_{j\in M}v_{1j}x_{1j},\dots,\sum_{j\in M}v_{nj}x_{nj})$ arising from balanced fractional allocation $x$ forms a polytope, which is precisely the convex full of the set of valuation vectors $(v_1(A_1),\dots,v_n(A_n))$ corresponding to balanced allocations $\bA$.
Thus, the fPO allocations can be characterized within the framework of multi-objective linear programming (see, e.g., \cite{ehrgott2005}).
\begin{proposition}\label{prop:fPO}
A balanced allocation $\bA$ is fPO if and only if it maximizes the weighted sum (weighted utilitarian social welfare) $\sum_{i\in N}\alpha_iv_i(A_i)$ for some positive weight vector $\balpha\in\mathbb{R}_{++}^N$.
\end{proposition}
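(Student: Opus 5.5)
The plan is to work in the space of utility vectors. Let $P\subseteq\mathbb{R}^N$ be the set of vectors $u(x)=(\sum_{j}v_{ij}x_{ij})_{i\in N}$ over balanced fractional allocations $x$. As observed just before the statement, $P$ is a polytope: it is the linear image of the polytope of balanced fractional allocations. Moreover, a balanced allocation $\bA$ is fPO exactly when $u(\bA)=(v_i(A_i))_i$ is a Pareto-maximal point of $P$, meaning no $u\in P$ satisfies $u\ge u(\bA)$ with $u\neq u(\bA)$. The statement therefore reduces to the classical fact from multi-objective linear programming that, for a polyhedron, Pareto-maximal points are exactly the maximizers of strictly positive linear functionals. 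I will prove both directions directly rather than only citing \cite{ehrgott2005}.

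The ``if'' direction is routine. Suppose $\bA$ maximizes $\sum_i\alpha_iv_i(A_i)$ with $\balpha>0$ over balanced allocations. Since $P$ is the convex hull of the utility vectors of integral balanced allocations (by total unimodularity, as noted earlier in the paper), $\bA$ also maximizes $\balpha^\top u$ over all of $P$. If some fractional $x$ Pareto-dominated $\bA$, then $u(x)-u(\bA)\ge 0$ with at least one strictly positive coordinate. Because every $\alpha_i>0$, this gives $\balpha^\top u(x)>\balpha^\top u(\bA)$, a contradiction.

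For the ``only if'' direction, I will use LP duality. Fix an fPO allocation $\bA$ with utility vector $\bar u=u(\bA)$, and consider the LP
\[
\max\ \sum_{i\in N} s_i \quad\text{s.t.}\quad \sum_{j\in M}v_{ij}x_{ij}-s_i=\bar u_i\ (i\in N),\ \sum_{j}x_{ij}=k,\ \sum_i x_{ij}=1,\ x\ge 0,\ s\ge 0.
\]
This LP is feasible, witnessed by $x=\bA$ and $s=0$. The fPO property forces every feasible $s$ to be $0$, so the optimum is $0$. By strong duality there is a dual solution with multipliers $\beta_i$ for the utility constraints and $p_i,q_j$ for the balance constraints that has objective value $0$. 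Dual feasibility for the $s_i$ columns gives $-\beta_i\ge 1$, so $\alpha_i:=-\beta_i\ge 1>0$.

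The remaining work, and the only delicate point, is to read off from the $x$-columns and complementary slackness that $\bA$ maximizes $\sum_i\alpha_i\sum_jv_{ij}x_{ij}$ over the balanced fractional polytope. The dual constraints for $x_{ij}$ read $p_i+q_j\ge\alpha_iv_{ij}$, which says that $(p,q)$ is feasible for the dual of the weighted welfare LP. The zero dual objective then yields $k\sum_ip_i+\sum_jq_j=\sum_i\alpha_i\bar u_i$, so the weighted welfare of $\bA$ meets a dual upper bound and is therefore optimal.

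The main thing to get right is the sign bookkeeping in this dual, with equality constraints and free variables. An alternative route is a separating-hyperplane argument between $P$ and the set $\bar u+(\mathbb{R}^N_+\setminus\{0\})$. That route is weaker, since it only gives $\balpha\ge0$, and recovering strict positivity there requires the polyhedrality of $P$. The LP duality argument above is the way I will handle this, because it produces $\alpha_i\ge1$ directly.
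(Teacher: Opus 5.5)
Your proof is correct, and the sign bookkeeping works out. The $s_i$-columns give $-\beta_i\ge 1$, the $x_{ij}$-columns give $p_i+q_j\ge\alpha_i v_{ij}$, and the zero dual objective together with weak duality shows that $\bA$ is optimal for $\balpha$. You don't actually need complementary slackness for that last step.

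The paper itself gives no argument for this proposition. It only observes that the set of achievable utility vectors is a polytope and defers to multi-objective linear programming \cite{ehrgott2005}. Your argument is a self-contained proof of exactly the fact being cited, namely the weighted-sum characterization of efficient points of a multi-objective LP. You obtain it by dualizing the auxiliary LP $\max\sum_i s_i$, which is the same LP the paper later uses in the proof of \Cref{thm:checkEF1fPO}.

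Your route buys three things over the citation: it is elementary, it gives the explicit normalization $\alpha_i\ge 1$, and it is constructive. Solving the dual of the fPO-checking LP of \Cref{thm:checkEF1fPO} returns a weight vector $\balpha$ together with potentials feasible for \eqref{eq:dual}. Note that your $p_i,q_j$ play the roles of the paper's $q_i,p_j$.

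The citation is shorter and places the result in its general setting. Your remark that strict positivity needs polyhedrality is precisely why the characterization holds here but can fail for general convex utility sets.
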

Note that, for any given strictly positive weight vector $\balpha$, the maximum value $\sum_{i\in N}\alpha_i\cdot v_i(A_i)$ taken over all balanced allocations $\bA$ can be computed by the following linear programming (LP):
\begin{align}\label{eq:primal}
\begin{array}{rll}
\max            & \sum_{i\in N}\sum_{j\in M}\alpha_i v_{ij}x_{ij}&\\
\mathrm{s.t.}   & \sum_{i\in N}x_{ij} = 1  & \forall j\in M, \\
                & \sum_{j\in M}x_{ij} = k  & \forall i\in N, \\
                & x_{ij}            \ge 0  & \forall i\in N,\ \forall j \in M.
\end{array}
\end{align}

To illustrate the above concepts, we present a concrete example.

\begin{example}\label{ex:simple}
Consider an instance with $N=\{1,2\}$, $M=\{1,2,3,4\}$, and the valuations are
\begin{align}
&(v_{11},v_{12},v_{13},v_{14})=(10,10,21,22),\ \text{and}\\
&(v_{21},v_{22},v_{23},v_{24})=(0,1,6,8).
\end{align}
The possible valuation vectors for balanced allocations are visualized in \Cref{fig:example}.
This instance has a unique EF1 and fPO allocation: $(\{1,3\},\{2,4\})$.
Note that the allocation $(\{1,4\},\{2,3\})$ is PO but not fPO.
Among the balanced allocations, $(\{1,2\},\{3,4\})$ maximizes the Nash social welfare, but it is not EF1.
Moreover, in this instance, no EF fractional allocation (i.e., ex ante EF) can be represented as a lottery over EF1 and fPO allocations (i.e., ex post EF1 and fPO).
\end{example}

\begin{figure}[htbp]
\centering
\begin{tikzpicture}[scale=0.2]
    \def\xmin{18}
    \draw[very thin,black!10] (\xmin,0) grid (45,15);
    \draw[->,thick] (\xmin,0) -- (45,0) node[below right] {$v_1(A_1)$};
    \draw[->,thick] (\xmin,0) -- (\xmin,15) node[left] {$v_2(A_2)$};
    \draw[blue,thick,fill=blue,fill opacity=.1] (20,14) -- (31,9) -- (43,1) -- (32,6) -- cycle;
    \node[dot,rectangle,label={[yshift=3pt]right:\textcolor{gray}{\tiny$(\{3,4\},\{1,2\})$}}] at (43, 1) {};
    \node[dot,red,label={left:\textcolor{gray}{\tiny$(\{2,4\},\{1,3\})$}}] at (32, 6) {};
    \node[dot,red,label={left:\textcolor{gray}{\tiny$(\{2,3\},\{1,4\})$}}] at (31, 8) {};
    \node[dot,red,label={right:\textcolor{gray}{\tiny$(\{1,4\},\{2,3\})$}}] at (32, 7) {};
    \node[dot,rectangle,red,label={right:\textcolor{gray}{\tiny$(\{1,3\},\{2,4\})$}}] at (31, 9) {};
    \node[dot,rectangle,label={[yshift=3pt]right:\textcolor{black}{\footnotesize$\bA=(\{1,2\},\{3,4\})$}}] at (20,14) {};
    \foreach \x in {20,25,30,35,40} {
        \draw (\x,0) -- (\x,-0.2) node[below,black!70] {\scriptsize\x};
    }
    \foreach \y in {5,10} {
        \draw (\xmin,\y) -- (\xmin-0.2,\y) node[left,black!70] {\scriptsize\y};
    }
\end{tikzpicture}
\caption{Possible valuation vectors for balanced allocations in \Cref{ex:simple}. 
Square markers indicate fPO allocations and red markers denote EF1 allocations.
The blue region represents the set of possible valuation vectors for fractional balanced allocations.}\label{fig:example}
\end{figure}
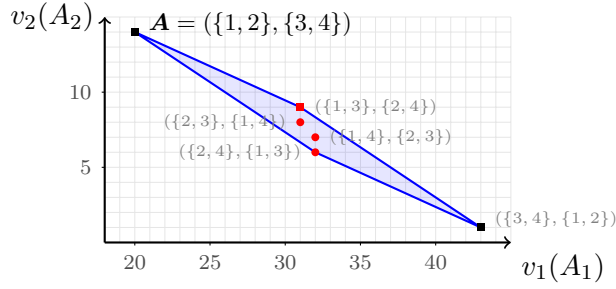

\section{Properties of EF1 and fPO Allocations}
In this section, we provide some properties of EF1 and fPO that will be used in subsequent discussions.

It is well known that an EF1 balanced allocation can be computed by the following \emph{round-robin procedure}~\cite{caragiannis2019unreasonable}:
First, an ordering of the agents is fixed. Then, according to this ordering, the agents take turns sequentially to choose their favorite available good (breaking ties arbitrarily). This process is repeated in multiple rounds until all goods have been allocated.
We call the outcome of the round-robin procedure the \emph{round-robin allocation}.



For a balanced allocation $\bA=(A_1,\dots,A_n)$ and a weight vector $\balpha\in\mathbb{R}_{++}^N$, define a weighted directed graph $G^{(\balpha)}_{\bA}=(N\dcup M\dcup\{r\};\overrightarrow{E}\cup\overleftarrow{E}\cup E^+,w^{(\balpha)})$, where $\overrightarrow{E}=N\times M$, $\overleftarrow{E}=\{(j,i)\in M\times N\mid j\in A_i\}$, $E^+=\{r\}\times M$, $w^{(\balpha)}(i,j)=-\alpha_iv_{ij}$ for all $(i,j)\in \overrightarrow{E}$, $w^{(\balpha)}(j,i)=\alpha_iv_{ij}$ for all $(j,i)\in \overleftarrow{E}$, and $w^{(\balpha)}(r,j)=0$ for all $j\in M$.
It is not difficult to see that $\bA$ maximizes $\sum_{i\in N}\alpha_iv_i(A_i)$ over all balanced allocations if and only if $G^{(\balpha)}_{\bA}$ contains no negative-weight directed cycles.

Next, we characterize the optimal solution of LP~\eqref{eq:primal}.
Its dual is given by:
\begin{align}\label{eq:dual}
\begin{array}{rll}
\min          & k\cdot \sum_{i\in N}q_i + \sum_{j\in M} p_j\\
\mathrm{s.t.} & q_i+p_j\ge \alpha_i v_{ij} \quad \forall i\in N,\ \forall j\in M.
\end{array}
\end{align}
We call the dual variables $(\bq,\bp)$ the \emph{potentials} and $\bp$ the \emph{price vector}.
By the complementary slackness theorem, a feasible pair $\bx$ and $(\bq,\bp)$ forms optimal solutions to the primal and dual LPs if and only if, for all $i\in N$ and $j\in M$, either $x_{ij}=0$ or $q_i+p_j=\alpha_iv_{ij}$.

Let $\bA$ be a balanced fPO allocation that maximizes the value $\sum_{i\in N}\alpha_iv_i(A_i)$ for some $\balpha\in\mathbb{R}_{++}^N$.
Then, we can construct an optimal dual solution as follows (see, e.g., Schrijver~\cite[Sec.~17.4]{schrijver2003}).
\begin{fullpaper}
For completeness, we provide a proof in the appendix.
\end{fullpaper}
\begin{lemma}[folklore]\label{lem:optimal-dual}
Let $\bA$ be a balanced allocation that maximizes $\sum_{i\in N}\alpha_iv_i(A_i)$ for some $\balpha\in\mathbb{R}_{++}^N$. Then, an optimal dual solution $(\bq,\bp)$ to \eqref{eq:dual} can be obtained by setting $q_i$ as the shortest length of $r$--$i$ path for each $i\in N$ and $-p_j$ as the shortest length of $r$--$j$ path in $G^{(\balpha)}_{\bA}$.
Here, the solution is nonnegative, i.e., $(\bq,\bp)\in\mathbb{R}_+^N\times \mathbb{R}_+^M$.
Moreover, the potentials computed in this way do not depend on the choice of $\bA$.
\end{lemma}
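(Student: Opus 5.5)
We first check that the shortest-path distances are well defined. Since $\bA$ maximizes $\sum_i\alpha_iv_i(A_i)$, the graph $G^{(\balpha)}_{\bA}$ has no negative-weight directed cycle, as stated just before the lemma. Every node can be reached from $r$: $r\to j$ for each good $j$, and $r\to j\to i$ for $j\in A_i$, which is nonempty because $k\ge1$. Hence the distances $d(\cdot)$ from $r$ are finite. We set $q_i=d(i)$ and $p_j=-d(j)$.

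\textbf{Dual feasibility.} Each arc satisfies the triangle inequality $d(\text{head})\le d(\text{tail})+w$. For the arc $(i,j)\in\overrightarrow{E}$ this gives $-p_j\le q_i-\alpha_iv_{ij}$, which is exactly $q_i+p_j\ge\alpha_iv_{ij}$.

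\textbf{Complementary slackness.} Take $j\in A_i$. The backward arc $(j,i)$ gives $q_i\le -p_j+\alpha_iv_{ij}$, so $q_i+p_j\le\alpha_iv_{ij}$. Combined with feasibility, this is an equality whenever $x_{ij}=1$. By the complementary slackness criterion recalled above, $(\bq,\bp)$ is therefore optimal for \eqref{eq:dual}.

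\textbf{Nonnegativity.} The arc $(r,j)$ gives $d(j)\le 0$, so $p_j\ge0$. For $q_i$, any $r$–$i$ path ends with an arc $(j,i)$ with $j\in A_i$. Before that arc, the path reaches $j$ either directly from $r$ (length $0$) or through some arc $(i',j)$. I would argue that every path from $r$ to $i$ has nonnegative length by showing that the shortest distance to $i$ is attained by a path using no earlier visit to $N$-nodes in a harmful way. A cleaner route is the following. Take a shortest path $r\to j_0\to i_1\to j_1\to\cdots\to i_t=i$. Its length is $\sum_s(\alpha_{i_s}v_{i_sj_{s-1}}-\alpha_{i_s}v_{i_sj_s})$ with appropriate indexing, since the path alternates backward arcs $j\to i$ (weight $+\alpha_iv_{ij}$, $j\in A_i$) and forward arcs $i\to j'$ (weight $-\alpha_iv_{ij'}$). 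The terms do not telescope, so instead I would use a direct bound: $q_i=\min$ over such paths, and the trivial path $r\to j\to i$ with $j\in A_i$ already gives $q_i\le\alpha_iv_{ij}$. That is an upper bound, not the lower bound we need.

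For the lower bound I would use optimality of the dual rather than the path structure. From feasibility, $q_i\ge\alpha_iv_{ij}-p_j$ for all $j$. Summing the tight equalities over $j\in A_i$ gives $kq_i=\alpha_iv_i(A_i)-\sum_{j\in A_i}p_j$. So it suffices to show $p_j\le\alpha_iv_{ij}$ for some $j\in A_i$. I expect this to be the main obstacle. To handle it, I would pick $j\in A_i$ minimizing $p_j$ and bound $-d(j)$ by tracing the shortest $r$–$j$ path back to its last $N$-node $i'$, then using the cycle condition. Nonnegativity of the values $v$ makes each forward arc cost at most $0$ in the relevant direction. Failing that, I would use the freedom to shift: $(q_i+c,\;p_j-c)$ preserves everything, but this does not preserve the stated formula. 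So the path argument is the one to carry out, starting from the observation that an $r$–$i$ path whose final backward arc comes from $j\in A_i$ can be compared, via the no-negative-cycle condition, with the path $r\to j\to i$.

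\textbf{Independence of $\bA$.} Let $\bA'$ be another maximizer, with distances $d'$. For every maximizer, the set of optimal duals is the same (it is determined by the LP alone). Any optimal dual $(\bq',\bp')$ satisfies the constraints with equality on the support of $\bA$ by complementary slackness. This makes the arc reduced costs $w+\pi(\text{tail})-\pi(\text{head})\ge0$ in $G_{\bA}$ for the potential $\pi=(\bq',-\bp')$ extended by $\pi(r)=\max_j p'_j$. I would show that the shortest-path potential is the pointwise extreme element of the optimal dual face among those satisfying $p\ge0$ with some $p_j=0$. Specifically, $\bp$ is the componentwise minimal optimal price vector subject to $\bp\ge0$, and this characterization does not mention $\bA$. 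Concretely, any optimal dual with $\bp'\ge0$ yields $-p'_j\le d(j)$ via reduced costs along paths, hence $\bp'\ge\bp$. Minimality then determines $(\bq,\bp)$ uniquely from the LP, and so it is the same for $\bA$ and $\bA'$.
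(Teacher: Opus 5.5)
\textbf{Gap: $q_i\ge 0$ is never proved.} Your feasibility and complementary-slackness steps, and $p_j\ge 0$ via the arc $(r,j)$, match the paper. For the nonnegativity of $q_i$, however, you list several routes (``I would\ldots'', ``Failing that\ldots'') and carry out none of them. Reducing to $p_j\le \alpha_i v_{ij}$ for some $j\in A_i$ is only a restatement of the claim via the tight equality. The comparison you finally endorse, of an $r$--$i$ path with $r\to j\to i$, only recovers the upper bound $q_i\le\alpha_i v_{ij}$, which you already noted is the wrong direction.

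The missing idea is to close the shortest path into a cycle with a forward arc. Let $P$ be a shortest $r$--$i$ path in $G^{(\balpha)}_{\bA}$. Its first arc is $(r,j_0)$ for some good $j_0$, with weight $0$, so the remainder of $P$ from $j_0$ to $i$ has length exactly $q_i$. Since $\overrightarrow{E}=N\times M$, the arc $(i,j_0)$ exists and has weight $-\alpha_i v_{ij_0}\le 0$. Appending it gives a closed walk of length $q_i-\alpha_i v_{ij_0}$. The absence of negative cycles then forces $q_i\ge \alpha_i v_{ij_0}\ge 0$. You had both ingredients (nonpositive forward arcs and the cycle condition) but never combined them. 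This is exactly the paper's argument, and the step matters downstream because \Cref{lem:pEF1} needs $q_i\ge 0$.

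\textbf{Independence of $\bA$: correct, but a different route.} The paper shows that every backward arc of $G^{(\balpha)}_{\bA}$ can be simulated by a no-longer path in $G^{(\balpha)}_{\bA'}$, so distances in each graph dominate those in the other. You instead characterize $(\bq,\bp)$ intrinsically as an extreme point of the optimal dual face, which is cleaner and visibly $\bA$-free. One fix is needed: take $\pi(r)=0$, not $\max_j p'_j$. With $\pi(r)=\max_j p'_j$ you only get $p_j\le p'_j+\max_{j'}p'_{j'}$, which proves nothing. With $\pi(r)=0$ and $\bp'\ge 0$, the arcs $(r,j)$ have reduced cost $p'_j\ge 0$, and summing reduced costs along paths gives both $d(j)\ge -p'_j$ and $d(i)\ge q'_i$. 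Apply this to the potentials $(\bq',\bp')$ built from another maximizer $\bA'$. These are optimal by your feasibility and slackness argument, and satisfy $\bp'\ge 0$ by the $(r,j)$ arcs. Doing this in both directions yields equality of $\bq$ and $\bp$ at once, so you need neither the ``some $p_j=0$'' normalization nor a separate argument that the minimal $\bp$ determines $\bq$. This part does not rely on the missing $q_i\ge 0$.
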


We introduce price envy-freeness up to one good ($\bp$-EF1), which is a key concept introduced by Barman et al.~\cite{barman2018finding}. This notion has been widely used in the literature for computing EF1 and PO allocations~\cite{barman2018finding, ebadian2022fairly, mahara2024polynomial, garg2022fair, garg2023new, garg2024weighted,mahara2025existence}.
\begin{definition}[$\bp$-EF1]
Let $(\bq,\bp)$ be potentials.
For any non-empty set of goods $X\subseteq M$, define $p(X)\coloneqq \sum_{j\in X}p_j$ and $\hat{p}(X)\coloneqq p(X)-\max_{j\in X}p_j$.
An allocation $\bA$ is called $\bp$-EF1 if $p(A_i)\ge \hat{p}(A_{i'})$ for any pair of agents $i,i'\in N$.
\end{definition}

For an appropriate choice of prices~$\bp$, $\bp$-EF1 implies EF1.
\begin{lemma}\label{lem:pEF1}
Given a weight vector $\balpha \in \mathbb{R}_{++}^N$, 
let $\bA$ be an optimal balanced allocation of LP~\eqref{eq:primal}, and let $(\bq, \bp)$ be an optimal solution of the dual LP~\eqref{eq:dual} such that $q_i \ge 0$ for all $i \in N$. 
If $\bA$ is $\bp$-EF1, then it is also EF1.
\end{lemma}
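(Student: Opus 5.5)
Fix agents $i,i'$. We may assume $A_{i'}\neq\emptyset$; by balancedness this holds whenever $k\ge1$. The plan is to use complementary slackness and dual feasibility to convert the price inequality $p(A_i)\ge\hat p(A_{i'})$ into a valuation inequality for agent $i$, scaled by $\alpha_i>0$.

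First, since $\bA$ (viewed as $x$) and $(\bq,\bp)$ are optimal for LP~\eqref{eq:primal} and LP~\eqref{eq:dual}, complementary slackness gives $q_i+p_j=\alpha_i v_{ij}$ for every $j\in A_i$. Summing over the $k$ goods of $A_i$ yields
\begin{align}
\alpha_i v_i(A_i)=k q_i+p(A_i).
\end{align}

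Second, dual feasibility gives $\alpha_i v_{ij}\le q_i+p_j$ for every $j\in A_{i'}$. Let $j^*\in\argmax_{j\in A_{i'}}p_j$. Summing over the $k-1$ goods of $A_{i'}\setminus\{j^*\}$ yields
\begin{align}
\alpha_i v_i(A_{i'}\setminus\{j^*\})\le (k-1)q_i+\hat p(A_{i'}).
\end{align}

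Third, we combine the two bounds using $\bp$-EF1 and $q_i\ge0$:
\begin{align}
\alpha_i v_i(A_{i'}\setminus\{j^*\})\le (k-1)q_i+\hat p(A_{i'})\le k q_i+p(A_i)=\alpha_i v_i(A_i).
\end{align}
Dividing by $\alpha_i>0$ gives $v_i(A_i)\ge v_i(A_{i'}\setminus\{j^*\})$, which is exactly the EF1 condition for the pair $(i,i')$.

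The only delicate point is the count of the $q_i$ terms. Balancedness ensures that $A_i$ contributes exactly $k$ copies of $q_i$ while $A_{i'}\setminus\{j^*\}$ contributes exactly $k-1$. The hypothesis $q_i\ge0$ is what absorbs the surplus copy, so I would state clearly where both balancedness and the nonnegativity of $q_i$ enter. The edge case $k=0$ is trivial.
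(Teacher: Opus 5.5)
Your proof is correct and follows the paper's argument: complementary slackness gives $\alpha_i v_i(A_i)=kq_i+p(A_i)$, dual feasibility bounds $\alpha_i v_i(A_{i'}\setminus\{j^*\})$ by $(k-1)q_i+\hat p(A_{i'})$, and $\bp$-EF1 together with $q_i\ge 0$ closes the gap. The paper divides by $\alpha_i$ at the outset rather than at the end, which is only a cosmetic difference.
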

\begin{appendixproof}[Proof of \Cref{lem:pEF1}]
For each $i\in N$ and $j\in M$, set $x_{ij}=1$ if $j\in A_i$ and $x_{ij}=0$ otherwise.
Then, $(x_{ij})_{i\in N,j\in M}$ is an optimal solution to the primal LP~\eqref{eq:primal}.
Thus, for any $i\in N$ and $j\in M$ with $j\in A_i$, we have $q_i+p_j=\alpha_iv_{ij}$.

Consider any pair of agents $i,i'\in N$.
Let $j^*$ be a good in $\argmax_{j\in A_{i'}}p_j$.
Then, we have
\begin{align}
v_i(A_i)
&=\sum_{j\in A_i}v_{ij}=\sum_{j\in A_i}\frac{q_i+p_j}{\alpha_i}
=\frac{kq_i}{\alpha_i}+\frac{p(A_i)}{\alpha_i}\\
&\ge \frac{kq_i}{\alpha_i}+\frac{\hat{p}(A_{i'})}{\alpha_i}
=\frac{kq_i}{\alpha_i}+\frac{p(A_{i'}\setminus\{j^*\})}{\alpha_i}\\
&=\frac{q_i}{\alpha_i} + \sum_{j\in A_{i'}\setminus\{j^*\}}\frac{q_i+p_j}{\alpha_i}\\
&\ge \sum_{j\in A_{i'}\setminus\{j^*\}}v_{ij}
=v_i(A_{i'}\setminus\{j^*\}).
\end{align}
This shows that if $\bA$ is $\bp$-EF1, then it is also EF1.
\end{appendixproof}

By the LP formulation, we can check whether a given balanced allocation $\bA$ is fPO in polynomial time.
\begin{theorem}\label{thm:checkEF1fPO}
Given a balanced allocation $\bA$, whether it satisfies EF1 and fPO can be checked in polynomial time.
\end{theorem}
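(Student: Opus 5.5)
The plan splits into two independent checks, EF1 and fPO.

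\textbf{EF1.} This is direct from the definition. For each ordered pair $i,i'\in N$, compute $v_i(A_i)$ and $v_i(A_{i'})-\max_{j\in A_{i'}}v_{ij}$ and compare them. Removing the good in $A_{i'}$ that $i$ values most is the best choice, so one comparison per pair suffices. This takes $O(n^2 k)=O(nm)$ time.

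\textbf{fPO.} By \Cref{prop:fPO}, $\bA$ is fPO iff there is some $\balpha\in\mathbb{R}_{++}^N$ such that $\bA$ is an optimal solution of LP~\eqref{eq:primal}. By complementary slackness, this holds iff there are $\balpha$ and $(\bq,\bp)$ satisfying
\begin{align}
q_i+p_j\ge \alpha_iv_{ij}\ \ (\forall i\in N,\, j\in M),\qquad q_i+p_j=\alpha_iv_{ij}\ \ (\forall i\in N,\, j\in A_i),\qquad \alpha_i>0\ \ (\forall i\in N).
\end{align}
Every constraint is linear in the joint variables $(\balpha,\bq,\bp)$, so this is a single feasibility LP. The strict inequalities are the only issue. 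The system is homogeneous, since scaling $(\balpha,\bq,\bp)$ by any positive factor preserves feasibility. We can therefore replace $\alpha_i>0$ by $\alpha_i\ge 1$ without changing feasibility, which gives an ordinary LP of polynomial size, solvable in polynomial time by the ellipsoid or interior-point method.

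\textbf{Alternative.} One can equivalently check fPO with the graph $G^{(\balpha)}_{\bA}$: $\bA$ is optimal for $\balpha$ iff that graph has no negative cycle. However, finding $\balpha$ that way is a nonlinear search, so the joint LP above is the cleaner route.

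\textbf{Main obstacle.} The only delicate step is justifying that the joint system characterizes fPO exactly. One direction is that if $\bA$ is optimal for some positive $\balpha$, LP duality supplies $(\bq,\bp)$ with complementary slackness. The other direction is that any feasible $(\balpha,\bq,\bp)$ certifies optimality of the integral $\bA$ via weak duality plus complementary slackness. Both are standard once \Cref{prop:fPO} is taken as given. Combining the EF1 and fPO checks proves the theorem.
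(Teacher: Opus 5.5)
Your proposal is correct, but your fPO check takes a different route from the paper's; it is in fact the LP dual of the paper's route. The paper never invokes \Cref{prop:fPO} in this proof. It solves the primal LP that maximizes $\sum_{i\in N} z_i$ subject to $\sum_{j\in M} v_{ij}x_{ij} = v_i(A_i)+z_i$ for all $i$, with $x$ a balanced fractional allocation and $x,z\ge 0$. It then observes that $\bA$ is fPO iff this optimum is $0$, which follows directly from the definition of fractional Pareto domination.

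Your feasibility system in $(\balpha,\bq,\bp)$ with $\alpha_i\ge 1$ is exactly the dual of that LP. The $\alpha_i$ are the multipliers of the utility constraints, and your normalization $\alpha_i\ge 1$ is precisely the dual constraint coming from the $z_i\ge 0$ columns. The condition ``primal optimum equals $0$'' becomes ``some dual-feasible point has $k\sum_i q_i+\sum_j p_j-\sum_i \alpha_i v_i(A_i)=0$'', which is your complementary-slackness condition on $\bA$.

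Each approach has a trade-off:
\begin{itemize}
\item The paper's argument is more self-contained, since it needs only the definition of fPO. A positive optimum also hands you a dominating fractional allocation as a witness.
\item Yours relies on the nontrivial direction of \Cref{prop:fPO}. LP duality applied to the paper's LP would prove exactly that direction, so you could make your argument self-contained the same way. In return, you get an explicit positive weight vector and potentials certifying fPO, which is the kind of certificate the rest of the paper works with.
\end{itemize}
Your homogeneity argument for replacing $\alpha_i>0$ by $\alpha_i\ge 1$ and your EF1 check are both fine.
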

\begin{appendixproof}[Proof of \Cref{thm:checkEF1fPO}]
Checking whether a given allocation $\bA$ is EF1 can be done in polynomial time by verifying the EF1 condition for all pairs of agents $i,i'\in N$. Thus, it remains to check whether the allocation is fPO.

We show that a balanced allocation $\bA$ is fPO if and only if the optimum value of the following LP is 0:
\begin{align}
\begin{array}{rll}
\max            & \sum_{i\in N}z_i&\\
\mathrm{s.t.}   & \sum_{j\in M}v_{ij}x_{ij}=v_i(A_i)+z_i & \forall i\in N,\\
                & \sum_{i\in N}x_{ij} = 1  & \forall j\in M, \\
                & \sum_{j\in M}x_{ij} = k  & \forall i\in N, \\
                & x_{ij},z_i        \ge 0  & \forall i\in N,\ \forall j \in M.
\end{array}
\end{align}
If $\bA$ is Pareto-dominated by a fractional balanced allocation $x$, then by setting $z_i=\sum_{j\in M}v_{ij}x_{ij}-v_i(A_i)$ for each $i\in N$, we can construct a feasible solution with a positive objective value, and thus conclude that the optimum value of the LP is positive. 
Conversely, if the optimum value of the LP is positive, then the fractional allocation appeared in the optimum solution Pareto dominates $\bA$, hence $\bA$ is not fPO.
Thus, $\bA$ is fPO if and only if the optimum value of the LP is 0.
Since linear programs can be solved in polynomial time, checking fPO can also be done in polynomial time.
\end{appendixproof}

Next, we demonstrate that any instance of the unconstrained problem can be transformed into an equivalent instance with balanced constraint.
This is achieved by introducing a set of ``dummy'' goods that have no value to any agent, allowing for a direct correspondence between the properties of allocations in the two settings. This equivalence is formalized in the following theorem.
\begin{theorem}\label{thm:unconstrained_to_balanced}
For an unconstrained fair allocation instance $(N,M,(v_i)_{i\in N})$, let $M'$ be a set of dummy goods of size $|N|\cdot (|M|-1)$ and let $v_i'(X)=v_i(X\cap M)$ for each $i\in N$ and $X\subseteq M\cup M'$.
The following equivalences hold:
\begin{itemize}
\item If an allocation $\bA$ is PO (resp., fPO, EF1) in the unconstrained instance, then a balanced allocation $(A_1\cup D_1,\dots,A_n\cup D_n)$ where $(D_1,\dots,D_n)$ is an ordered partition of $M'$ is PO (resp., fPO, EF1) in the balanced instance.
\item If a balanced allocation $\bA'$ is PO (resp., fPO, EF1) in the balanced instance, then an allocation $(A_1\cap M,\dots,A_n\cap M)$ is PO (resp., fPO, EF1) in the unconstrained instance.
\end{itemize}
\end{theorem}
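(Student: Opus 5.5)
Throughout I use the standing assumption that the number of goods in the balanced instance, $|M|+|M'| = m + n(m-1)$, is a multiple of $n$. This forces $n \mid m$, hence $m\ge n$ (we have $m\ge 1$). Write $k' = (m+|M'|)/n = m/n + m - 1 \ge m$. The whole proof rests on one observation: dummies are worth $0$ to everyone, so $v_i'(A_i\cup D_i)=v_i(A_i)$ and $v'_i(X)=v_i(X\cap M)$. The utility vector of a balanced allocation therefore equals the utility vector of its restriction to $M$. In the first bullet, $(D_1,\dots,D_n)$ is taken with $|D_i| = k'-|A_i|$, so that the extended allocation is balanced. This is possible because $|A_i|\le m\le k'$ and $\sum_i (k'-|A_i|) = nk'-m = |M'|$.

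\textbf{EF1.} I check both directions directly from the definition.

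Forward direction. Fix $i,i'$.
If $A_{i'}\neq\emptyset$, take the good $j\in A_{i'}$ witnessing EF1 in the unconstrained instance. Then $v'_i((A_{i'}\cup D_{i'})\setminus\{j\}) = v_i(A_{i'}\setminus\{j\}) \le v_i(A_i) = v_i'(A_i\cup D_i)$.
If $A_{i'}=\emptyset$, removing any good of $A_{i'}\cup D_{i'}$ leaves a bundle of value $0$, and a nonempty bundle exists since $k'\ge 1$.

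Converse. Let $j$ witness EF1 for $(i,i')$ in $\bA'$.
If $j\in M$, the same $j$ works for the restrictions.
If $j\in M'$, then $v_i(A'_{i'}\cap M)=v'_i(A'_{i'}\setminus\{j\})\le v_i(A'_i\cap M)$, which is even envy-free for this pair.

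\textbf{PO and fPO.} Both are proved by transferring dominating allocations between the instances.

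Balanced to unconstrained. Suppose a balanced (fractional, resp. integral) $x'$ dominates the extended allocation. Deleting the dummy columns gives an unconstrained (fractional, resp. integral) allocation of $M$ with the same utility vector. That allocation dominates $\bA$.

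Unconstrained to balanced. Suppose an unconstrained allocation $y$ of $M$ dominates $(A'_i\cap M)_i$. Its row sums satisfy $s_i=\sum_{j\in M}y_{ij}\le m\le k'$ and $\sum_i s_i = m$. I complete $y$ to a balanced allocation by giving agent $i$ a dummy amount $k'-s_i\ge0$. The total dummy amount is $nk'-m=|M'|$.
In the fractional case, set $x_{ij'}=(k'-s_i)/|M'|$ for every dummy $j'$. This gives column sums $1$ and row sums $k'$.
In the integral case, $s_i$ is an integer, so hand out exactly $k'-s_i$ whole dummies to agent $i$.
The utilities are unchanged, so the completed allocation dominates $\bA'$. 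This contradicts its PO (resp. fPO).

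The only delicate point is the size bookkeeping. Everything hinges on $k'\ge m$, which follows from the divisibility assumption. This inequality guarantees that any unconstrained (fractional) allocation can be padded with dummies to a balanced one. Once it is checked, all six implications are routine.
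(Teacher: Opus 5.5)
Your proof is correct and follows the dummy-goods correspondence the paper describes. The paper states this theorem without a written proof, saying only that zero-valued dummies give a direct correspondence; your restriction and padding steps both preserve utility vectors, and $k'\ge m$ is exactly what makes the integral and fractional padding feasible. You are also right that the literal count $|M'|=n(m-1)$ forces $n\mid m$; with $(n-1)m$ dummies instead (so $k'=m$), your argument works verbatim for every $m$, which is what the paper's reduction for arbitrary unconstrained instances actually needs.
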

This theorem implies that if we construct a polynomial-time algorithm to find an EF1 and fPO balanced allocation for a specific setting, then it can also be used to find an EF1 and fPO allocation for the corresponding unconstrained setting.
This result has immediate algorithmic implications. For instance, our polynomial-time algorithm for the two-types case under the balanced constraint can be directly applied to the two-types case in the unconstrained setting. This is because, after the addition of dummy goods, the agent types remain unchanged, ensuring that the resulting balanced instance continues to fall within the two-types framework.

It is known that, in the unconstrained setting, the problem of deciding whether a given allocation is PO is coNP-complete~\cite{de2009complexity}.
By combining this with \Cref{thm:unconstrained_to_balanced}, we obtain the following corollary.
\begin{corollary}\label{cor:checkPO}
The problem of deciding whether a given balanced allocation is PO is coNP-complete.
\end{corollary}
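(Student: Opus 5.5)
We need to show that deciding PO for a balanced allocation is both in coNP and coNP-hard.

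Membership in coNP is direct. A certificate that a given balanced allocation $\bA$ is \emph{not} PO is another balanced allocation $\bA'$ that Pareto-dominates it. Checking that $\bA'$ is balanced, and comparing $v_i(A'_i)$ with $v_i(A_i)$ for every $i\in N$, takes polynomial time. Hence the complement of the problem lies in NP.

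For hardness, reduce from the unconstrained problem, which is coNP-complete~\cite{de2009complexity}. The reduction runs as follows.
\begin{enumerate}
\item Take an unconstrained instance $(N,M,(v_i)_{i\in N})$ together with an allocation $\bA$.
\item Build the balanced instance of \Cref{thm:unconstrained_to_balanced}: add $|N|(|M|-1)$ dummy goods of zero value to every agent. Then $|M\cup M'|=|N|\cdot|M|-|N|+|M|$, and this count needs care. A cleaner choice is $|N|\cdot|M|-|M|$ dummies, which makes the total $|N|\cdot|M|$ and gives $k=|M|$. Either way, the bundle size must be at least $\max_i|A_i|$ and the total must be divisible by $n$; I will follow whichever count the theorem fixes.
\item Pad $\bA$ with dummies so that every bundle reaches size $k$. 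This is possible because $|A_i|\le |M|$ bounds each bundle. Call the result $\bA'=(A_1\cup D_1,\dots,A_n\cup D_n)$.
\end{enumerate}
The instance has polynomial size, since the number of dummies is $O(nm)$.

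Correctness uses both directions of \Cref{thm:unconstrained_to_balanced}:
\begin{itemize}
\item If $\bA$ is PO in the unconstrained instance, then $\bA'$ is PO in the balanced instance.
\item If $\bA'$ is PO, then $(A'_i\cap M)_i=\bA$ is PO.
\end{itemize}
So $\bA$ is PO if and only if $\bA'$ is PO, and the reduction is a valid many-one reduction.

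I expect no step to be a real obstacle. The only care needed is to confirm that the dummy-good count makes the padding well defined and the total a multiple of $n$, and that the theorem's equivalence applies to this specific padding.
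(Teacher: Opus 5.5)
Your proposal is correct and follows the paper's route: coNP-hardness comes from combining the coNP-completeness of PO-checking in the unconstrained setting~\cite{de2009complexity} with the dummy-goods equivalence of \Cref{thm:unconstrained_to_balanced}, and membership in coNP is the routine certificate you describe. You should commit to your own count of $|M|(|N|-1)$ dummies, which gives $k=|M|$, rather than defer to the theorem's stated $|N|(|M|-1)$, because the latter yields $|N||M|-|N|+|M|$ goods in total, and that is not a multiple of $|N|$ unless $|N|$ divides $|M|$.
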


\section{Personalized Bivalued Valuations Case}
In this section, we consider the case of personalized bivalued valuations, where each agent $i \in N$ assigns to every good $j \in M$ a value $v_{ij} \in \{a_i, b_i\}$ for some $a_i > b_i\ge 0$. We show that a balanced allocation that is both EF1 and fPO always exists, and that such an allocation can be computed in polynomial time.

\begin{theorem}\label{thm:bivalued}
When each agent has a personalized bivalued valuation, there always exists a balanced allocation that is both EF1 and fPO. 
Moreover, such an allocation can be found in polynomial time.
\end{theorem}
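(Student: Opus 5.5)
Following the paper's hint, the plan is to find a weight vector $\balpha$ such that some maximizer of the weighted welfare (LP~\eqref{eq:primal}) is EF1. Then \Cref{prop:fPO} gives fPO. For bivalued agents the natural normalization is to rescale each $v_i$ so that only the \emph{ratio} $a_i/b_i$ matters. First consider $b_i>0$ for all $i$ (the case $b_i=0$ is handled by a limiting argument or a separate tie rule). Choose $\alpha_i = 1/b_i$. Agent $i$ then values every good at $1$ or at $r_i\coloneqq a_i/b_i>1$.

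With these weights, a maximum-weight balanced allocation is a maximum-weight $b$-matching: each good contributes $1$, plus an extra $r_i-1$ if it goes to an agent who likes it (an ``$a$-good'' for $i$). Since every agent gets exactly $k$ goods, maximizing $\sum_i \alpha_i v_i(A_i)$ amounts to maximizing $\sum_i (r_i-1)\cdot|\{j\in A_i : v_{ij}=a_i\}|$.

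The first attempt would therefore be: among all welfare maximizers, pick one that is EF1, and show one exists by a local exchange argument. Suppose $i$ envies $i'$ beyond one good. Since bundles have equal size $k$, this means $i$ views at least two more goods in $A_{i'}$ as $a_i$-goods than in $A_i$. Equivalently, $i'$ holds $a_i$-goods while $i$ holds $b_i$-goods.

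Swapping such a pair (one good $g\in A_{i'}$ with $v_{ig}=a_i$, one good $h\in A_i$ with $v_{ih}=b_i$) keeps the allocation balanced. The change in weighted welfare is $(r_i-1)$ minus whatever $i'$ loses, namely $(r_{i'}-1)$ if $g$ was an $a$-good for $i'$ and $h$ is not. So the swap is welfare-neutral or improving unless $r_{i'}>r_i$ and $i'$ likes $g$ but not $h$.

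The main obstacle is exactly this case: agents with larger ratios hoard goods, and fixed weights $1/b_i$ may never yield EF1. The fix is to choose the weights adaptively rather than fixing them. The plan is to start from $\alpha_i=1/b_i$ and raise the weight of envious agents, in the style of Barman et al.'s price-rise step, using the dual potentials of \Cref{lem:optimal-dual} and aiming for $\bp$-EF1 so that \Cref{lem:pEF1} applies.

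Concretely, the algorithm would do the following:
\begin{itemize}
\item Maintain an optimal allocation together with its dual prices $\bp$, as given by \Cref{lem:optimal-dual}.
\item Identify the least-spender $i$, i.e., the agent minimizing $p(A_i)$.
\item Multiply $\alpha_i$ (equivalently, lower $i$'s effective ratio) until a new tight edge appears, namely a new shortest-path edge in $G^{(\balpha)}_{\bA}$. Then transfer along an alternating path, swapping one of $i$'s goods for a good of the bundle that violates $\bp$-EF1.
\end{itemize}
Bivaluedness enters here: every item's price for its owner is one of only two values ($\alpha_i a_i - q_i$ or $\alpha_i b_i - q_i$). This should let us show that each swap changes the relevant bundle prices by a controlled amount, and so bound the number of phases polynomially via a potential such as the number of $a$-goods held by high-price agents.

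Termination would be argued as in Garg--Murhekar style bivalued algorithms. Weights take values only in $\{1/b_i, a_i/(b_i\cdot\text{stuff})\}$-type discrete sets, so each agent's weight is raised $O(1)$ times in the relevant sense. Correctness at termination comes from $\bp$-EF1 together with \Cref{lem:pEF1}, and fPO follows from \Cref{prop:fPO}.

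Finally, the $b_i=0$ agents need special care, since their $\alpha_i$ cannot be normalized by $b_i$. I would treat them by assigning them ratio $+\infty$ and handling their $0$-valued goods as filler, which is harmless because EF1 only needs their $a$-good counts to be balanced up to one.
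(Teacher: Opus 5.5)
There is a genuine gap. The adaptive price-rise procedure, which is where all the work would have to happen, is only asserted. You do not show three things it needs:
\begin{itemize}
\item that the swap you describe runs along a cycle of tight edges for the updated weights, so the new allocation is still a maximizer;
\item that the process actually reaches a $\bp$-EF1 allocation;
\item that it stops after polynomially many steps.
\end{itemize}
The termination claim is the weakest point. In bivalued price-rise algorithms, the discreteness behind the termination bounds comes from all agents sharing the same pair $\{a,b\}$. Here the ratios $a_i/b_i$ are agent-specific, so there is no reason for the weights to stay in a small discrete set or to be raised only $O(1)$ times. The $b_i=0$ case is also left as a placeholder. As written, the proposal establishes neither existence nor polynomial time.

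The missing idea is that the obstacle you correctly identify ($r_{i'}>r_i$ makes your swap welfare-decreasing) is an artifact of normalizing by $b_i$. Normalize by the gap instead. With $\alpha_i=1/(a_i-b_i)$ you get $\alpha_i v_i(A_i)=k b_i/(a_i-b_i)+s_i$, where $s_i$ is the number of goods in $A_i$ that $i$ values at $a_i$. The weighted welfare is therefore a constant plus the total number of high-valued assignments, and $b_i=0$ needs no special treatment.

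Under these weights your swap gives $i$ exactly $+1$ and costs $i'$ at most $1$, so it never lowers welfare, whatever the ratios are. Moreover, by \Cref{prop:bivalued-fPO}, every fPO balanced allocation maximizes this single objective, so no weight search is needed at all.

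What remains is a tie-break among maximizers that spreads high goods evenly. The paper gives each agent $k$ slots, adds a bonus $s\epsilon$ to $a_i$-edges at agent $i$'s $s$-th slot with $\epsilon=1/(nk(k+1))$, and takes a maximum-weight perfect matching. Suppose $i$ envied $i'$ beyond one good. Then $i'$ would hold $s'\ge s+2$ goods that $i$ values at $a_i$, where $s$ is the number $i$ holds. Exchange $i$'s highest-slot $b_i$-good (slot at least $k-s$) with $i'$'s lowest-slot such good (slot at most $k-s'+1$). This raises the matching weight by at least $\epsilon$, a contradiction.

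The total bonus is at most $1/2$, and welfare differences under these weights are integers. Hence the matching is an exact maximizer of the weighted welfare, and therefore fPO by \Cref{prop:fPO}.
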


Our algorithm prepares $k$ slots for each agent, and considers a perfect matching between slots and goods as a balanced allocation. We set $\epsilon=1/(nk(k+1))$ and the weight of each edge between $s$th slot of agent $i$ and good $j\in M$ as 
\begin{align}
w\big((i,s),j\big)=\begin{cases}
a_i/(a_i-b_i)+s\cdot\epsilon & \text{if }v_{ij}=a_i,\\
b_i/(a_i-b_i)                & \text{if }v_{ij}=b_i.
\end{cases}
\end{align}
Our algorithm outputs a balanced allocation $\bA^*$ corresponding to a maximum weight perfect matching $X^*$ in this weighted bipartite graph.
Intuitively, the weights without the $s\cdot\epsilon$ term are chosen so that the objective function increases by the same amount regardless of which agent receives a high-value good. The $s\cdot\epsilon$ term is included to ensure that high-value goods are distributed as evenly as possible among the agents.

\begin{fullpaper}
A summary of our algorithm is presented in \Cref{alg:bivalued}.
\begin{algorithm}[ht]
\begin{algorithmic}[1]
    \caption{Personalized bivalued case}\label{alg:bivalued}
    \STATE Let $\epsilon\gets \frac{1}{nk(k+1)}$
    \STATE Construct a complete bipartite graph $(V,M; E)$ where $V=N\times [k]$ and $E=V\times M$
    \STATE Define the weight $w(e)$ for each edge $e=\big((i,s),j\big)$ as $a_i/(a_i-b_i)+s\cdot\epsilon$ if $v_{ij}=a_i$ and $b_i/(a_i-b_i)$ if $v_{ij}=b_i$
    \STATE Let $X^*\subseteq E$ be the maximum-weight perfect matching in the bipartite graph, and let $\bA^*$ be the corresponding balanced allocation where $A^*_i=\big\{j\in M\mid \big((i,s),j\big)\in X^*\text{ for some }s\in[k]\big\}$\;
    \STATE \textbf{Return} $\bA^*$\;
\end{algorithmic}
\end{algorithm}
\end{fullpaper}

In the personalized bivalued setting, fPO allocations can be characterized by a specific weight vector $\balpha$ such that $\alpha_i=1/(a_i-b_i)$ for each $i\in N$.
This motivates the choice of this particular weight vector.
\begin{proposition}\label{prop:bivalued-fPO}
For agents with personalized bivalued valuations, a balanced allocation $\bA$ is fPO if and only if it maximizes the value $\sum_{i\in N}v_{i}(A'_i)/(a_i-b_i)$ over all balanced allocations.
\end{proposition}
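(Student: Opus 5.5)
The "if" direction is immediate from \Cref{prop:fPO}: with $\alpha_i = 1/(a_i-b_i) > 0$, an allocation maximizing $\sum_{i\in N} v_i(A_i)/(a_i-b_i)$ is fPO. All the work is in the converse. Suppose $\bA$ is fPO. By \Cref{prop:fPO}, $\bA$ maximizes $\sum_i \alpha_i v_i(A_i)$ for some $\balpha \in \mathbb{R}_{++}^N$. I will show that $\bA$ also maximizes the weighted sum with $\alpha'_i = 1/(a_i-b_i)$. It is convenient to rewrite the objective. With $H_i(\bA) = \{j\in A_i : v_{ij}=a_i\}$, we have
\begin{align}
v_i(A_i) = k b_i + (a_i-b_i)|H_i(\bA)|,
\end{align}
so the $\balpha'$-objective equals $\sum_i k b_i/(a_i-b_i) + \sum_i |H_i(\bA)|$. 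Up to a constant, it is the total number of goods assigned to agents who value them highly. Let $h(\bA) = \sum_i |H_i(\bA)|$. The claim becomes: every fPO balanced allocation maximizes $h$.

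I argue by contradiction. Suppose $\bA$ is fPO but some balanced $\bA'$ has $h(\bA') > h(\bA)$. Use the characterization recalled earlier: $\bA$ maximizes the $\balpha$-sum iff $G^{(\balpha)}_{\bA}$ has no negative cycle. Equivalently, view it in the slot/matching picture: each agent gets $k$ slots, and a balanced allocation is a perfect matching between slots and goods. Take the symmetric difference of $\bA$ and $\bA'$ as perfect matchings in the bipartite graph (agents with capacity $k$, goods). It decomposes into alternating cycles, each of the form $i_1 \to j_1 \to i_2 \to j_2 \to \dots$, where along the cycle agent $i_t$ gives good $j_{t-1}$ and receives $j_t$. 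Each such cycle is an exchange that keeps the allocation balanced. Since $h$ is additive over these cycles, some cycle $C$ strictly increases $h$. Applying $C$ alone to $\bA$ gives a balanced allocation $\bA''$ with $h(\bA'') > h(\bA)$.

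The key step is to turn this into a Pareto improvement, or at least a fractional one. In the cycle, each agent's change in high-valued count is $\delta_i \in \{-1,0,1\}$, and by assumption $\sum \delta_i \ge 1$. Agents with $\delta_i \ge 0$ are weakly better off. The difficulty is agents with $\delta_i = -1$.

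My first idea is to shortcut the cycle. Walk along $C$ and pair each losing step with a preceding gaining step, so that an agent who gains a high good can pass it back along a shorter sub-cycle. This yields a sub-cycle in which every agent has $\delta \ge 0$ and at least one has $\delta = 1$. That would be a Pareto improvement, contradicting fPO. Concretely, I would treat the cycle as a cyclic sequence of $\pm 1, 0$ values with positive sum. By the cycle lemma, there is a starting point where all partial sums are positive. Then I would try to splice goods to get a shorter exchange cycle among the agents involved.

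I expect this splicing to be the main obstacle, since shortcutting must remain a valid exchange. The agent structure helps here: an agent $i$ can receive any good, and its value depends only on whether the good is in $i$'s high set.

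If splicing fails, the fallback is an LP argument. I would look for $\balpha$-free reasoning via the weighted cycle sum $\sum_t \alpha_{i_t}(a_{i_t}-b_{i_t})\delta_t$. Choosing a fractional combination of the rotations of $C$ (a lottery over shifted exchanges) would average the losses away and give a fractional Pareto improvement. This only needs fPO, not PO, which fits the statement.
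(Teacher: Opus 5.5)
\paragraph{Gap: the shortcut step is never carried out.}
Your outline matches the paper's: reduce the objective to the count $h$ of highly valued goods, take the symmetric difference with a better $\bA'$, isolate an $h$-improving alternating cycle, and shortcut it into a Pareto improvement. But the step that carries the proof is left undone. You never say which sub-cycle to take, or check that it is a valid exchange that hurts nobody, and you flag the splicing yourself as the obstacle.

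The missing idea is short. Write the cycle as $(i_1,j_1,\dots,i_h,j_h)$ with $j_t\in A_{i_t}$ and $i_t$ receiving $j_{t-1}$. Label each position as a gain, a loss, or neutral.
\begin{itemize}
\item If there is no loss, the whole cycle is already a Pareto improvement.
\item Otherwise gains strictly outnumber losses. By pigeonhole there are two cyclically consecutive gain positions $t_s,t_{s+1}$ with no loss between them. Your cycle-lemma rotation also gives this: if all partial sums are positive, the first two nonzero entries must both be $+1$.
\end{itemize}
Now rotate goods only along $i_{t_s},j_{t_s},\dots,i_{t_{s+1}},j_{t_{s+1}}$, and close the sub-cycle by giving $j_{t_{s+1}}$ to $i_{t_s}$.
\begin{itemize}
\item The neutral positions strictly in between make exactly their old trades.
\item $i_{t_{s+1}}$ still gives away a $b$-good and receives an $a$-good, so it strictly gains.
\item $i_{t_s}$ is a gain position, so it gives away a good worth $b_{i_{t_s}}$; whatever it receives cannot hurt it.
\end{itemize}
This is exactly where bivaluedness is used: a gain position is a safe place to close the cycle. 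The result is an \emph{integral} Pareto improvement, so $\bA$ is not even PO. Your worry that only a fractional improvement might be available is unfounded; in this setting PO and fPO coincide.

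\paragraph{The fallback does not work.}
Suppose by rotations of $C$ you mean the exchanges in which each $i_t$ receives $j_{t-r}$. Then a lottery over them need not be a Pareto improvement. Take a $3$-cycle in which $i_2,i_3$ gain and $i_1$ loses, with $i_1$ valuing both $j_2$ and $j_3$ at $b_{i_1}$. Every nontrivial shift takes $j_1$ (worth $a_{i_1}$) from $i_1$ and hands it a $b_{i_1}$-good. So any positive weight on such shifts strictly lowers $i_1$'s expected utility, even though the sub-cycle through $i_2,i_3$ is a Pareto improvement.

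\paragraph{Smaller points.}
If your alternating cycle may revisit an agent, record gains and losses per position rather than per agent, or first split it into cycles with distinct agents. The shortcut above works position by position either way. Also, invoking \Cref{prop:fPO} to obtain some $\balpha$ at the start of the converse is unnecessary, since the contradiction comes directly from the Pareto improvement.
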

\begin{appendixproof}[Proof of \Cref{prop:bivalued-fPO}]
We prove both directions of the equivalence.

The ``if'' direction is immediate from \Cref{prop:fPO}. 
For the ``only if'' direction, we proceed by contraposition. Suppose there exists a balanced allocation $\bA$ that does not maximize $\sum_{i\in N}v_{i}(A_i)/(a_i-b_i)$. We show that $\bA$ is not fPO.

Under this assumption, there exists another balanced allocation $\bA'$ such that $\sum_{i\in N}v_{i}(A'_i)/(a_i-b_i)>\sum_{i\in N}v_{i}(A_i)/(a_i-b_i)$.
Consider the symmetric difference between the corresponding matchings of $\bA$ and $\bA'$ in the complete bipartite graph $(N,M;N\times M)$, which can be decomposed into alternating cycles.
Then, there must exist an improving cycle $(i_1,j_1,i_2,j_2,\dots,i_h,j_h)$ with
\begin{enumerate}[label=(\alph*)]
    \item distinct agents $i_1,i_2,\dots,i_h\in N$,
    \item distinct goods $j_1,j_2,\dots,j_h\in M$,
    \item $j_t\in A_{i_t}$ for all $t\in[h]$, and
    \item $\sum_{t=1}^h \frac{v_{i_t j_t}}{a_{i_t}-b_{i_t}}<\sum_{t=1}^h \frac{v_{i_{t} j_{t-1}}}{a_{i_t}-b_{i_t}}$ (where $j_{0}\coloneqq j_h$).
\end{enumerate}
Define index sets 
\begin{align}
I=\{t\in[h] \mid v_{i_tj_t}<v_{i_{t}j_{t-1}}\}\text{ and }
I'=\{t\in[h] \mid v_{i_tj_t}>v_{i_{t}j_{t-1}}\}.
\end{align}
Note that $(v_{i_tj_t},v_{i_{t}j_{t-1}})=(b_{i_t},a_{i_t})$ if $t\in I$, $(v_{i_tj_t},v_{i_{t}j_{t-1}})=(a_{i_t},b_{i_t})$ if $t\in I'$, and $(v_{i_tj_t},v_{i_{t}j_{t-1}})=(a_{i_t},a_{i_t})$ or $(b_{i_t},b_{i_t})$ otherwise.
Hence, $v_{i_tj_{t-1}}-v_{i_tj_t}=a_{i_t}-b_{i_t}$ if $t\in I$, $v_{i_tj_{t-1}}-v_{i_tj_t}=-(a_{i_t}-b_{i_t})$ if $t\in I'$, and $v_{i_tj_{t-1}}-v_{i_tj_t}=0$ otherwise.
Thus, condition (d) implies
\begin{align}
|I|-|I'|
&=\sum_{t=1}^h \frac{v_{i_t j_{t-1}}-v_{i_t j_t}}{a_{i_t}-b_{i_t}}\\
&=\sum_{t=1}^h \frac{v_{i_t j_{t-1}}}{a_{i_t}-b_{i_t}}-\sum_{t=1}^h \frac{v_{i_t j_{t}}}{a_{i_t}-b_{i_t}}>0.
\end{align}

We now consider two cases:
\begin{description}
\item[Case 1: $|I'|=0$.]
Construct allocation $\hat{\bA}$ by transferring each good $j_t$ from $i_t$ to $i_{t+1}$ (modulo $h$, so $j_h$ is transferred to $i_1$).
In this case, no agent is worse off after the transfer since every agent receives a good with at least as much value as the one they gave away, and each agent in $I~(\neq \emptyset)$ strictly improves their utility. Thus, $\hat{\bA}$ Pareto dominates $\bA$, implying that $\bA$ is not fPO.

\item[Case 2: $|I'|>0$.]
Let $I=\{i_{t_1},i_{t_2},\dots,i_{t_\ell}\}$ where $t_1<t_2<\dots<t_\ell$.
Since $|I|>|I'|>0$, we have $\ell\ge 2$.
There exists a cyclic interval $[t_s,t_{s+1}]$ containing no $t\in I'$. 
Here, $[t_s,t_{s+1}]=\{t_s,t_s+1,\dots,t_{s+1}\}$ for $s\in[\ell-1]$ and 
$[t_\ell,t_{\ell+1}]=[t_\ell,t_1]=\{t_\ell,t_\ell+1,\dots,h,1,2,\dots,t_1\}$.
Construct allocation $\tilde{\bA}$ by rotating goods within the cycle $(i_{t_s},j_{t_s},i_{t_s+1},j_{t_s+1},\dots,i_{t_{s+1}},j_{t_{s+1}})$.
Then, $\tilde{\bA}$ Pareto dominates $\bA$ since no agent is worse off after the rotation, and agent $i_{t_s}$ is strictly better off.
This is because, 
\begin{itemize}
    \item agents $i_{t_s+1},\dots,i_{t_{s+1}-1}$ are not in $I\cup I'$, and hence they remain unchanged;
    \item agent $i_{t_s}\in I$ does not worse off since $i_{t_s}$ gives up good $j_{t_{s}}$ of value $b_{i_{t_s}}$ and gets good $j_{t_{s+1}}$ of value at least $b_{i_{t_s}}$;
    \item agent $i_{t_{s+1}}\in I$ strictly better off since $i_{t_{s+1}}$ gives up good $j_{t_{s+1}}$ of value $b_{i_{t_{s+1}}}$ and gets good $j_{t_{s+1}-1}$ of value $a_{i_{t_{s+1}}}$.
\end{itemize}
Thus, $\tilde{\bA}$ Pareto dominates $\bA$, implying that $\bA$ is not fPO.
\end{description}

In both cases, we achieve a Pareto improvement over $\bA$. Therefore, $\bA$ cannot be fPO.
\end{appendixproof}

We show that $\bA^*$ computed in our algorithm satisfies both EF1 and fPO.
\begin{lemma}\label{lem:bivalued-EF1fPO}
$\bA^*$ is EF1 and fPO.
\end{lemma}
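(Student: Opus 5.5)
The plan is to compute the weight of $X^*$ explicitly as a function of the allocation. That expression gives fPO through \Cref{prop:bivalued-fPO}, and EF1 follows from an exchange argument using its lexicographic structure.

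\textbf{Step 1: the weight of a matching as a function of the allocation.} For a balanced allocation $\bA$, let $h_i=|\{j\in A_i\mid v_{ij}=a_i\}|$ be the number of high-valued goods agent $i$ receives. Any perfect matching inducing $\bA$ has weight
\begin{align}
\sum_{i\in N}\frac{v_i(A_i)}{a_i-b_i}+\epsilon\sum_{i\in N}\sum_{s\in S_i}s,
\end{align}
where $S_i\subseteq[k]$ is the set of slots of $i$ holding high goods, so $|S_i|=h_i$. We have $v_i(A_i)/(a_i-b_i)=kb_i/(a_i-b_i)+h_i$.

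Hence the maximum weight over matchings inducing $\bA$ is
\begin{align}
F(\bA)=C+H(\bA)+\epsilon\sum_{i\in N} g(h_i),
\end{align}
where $C=\sum_i kb_i/(a_i-b_i)$ is a constant, $H(\bA)=\sum_i h_i$, and $g(h)=\sum_{s=k-h+1}^{k}s$. The function $g$ comes from putting the high goods in the top slots. Its increments are $g(h+1)-g(h)=k-h$, which is strictly decreasing in $h$, so $g$ is strictly concave.

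Since $0\le \epsilon\sum_i g(h_i)\le \epsilon\, n\,k(k+1)/2=1/2<1$ and $H$ is an integer, maximizing $F$ is lexicographic. It first maximizes $H(\bA)$, and then, among allocations achieving that maximum, it maximizes $\sum_i g(h_i)$. The allocation $\bA^*$ maximizes $F$.

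\textbf{Step 2: fPO.} Since $\sum_i v_i(A_i)/(a_i-b_i)=C+H(\bA)$, the allocation $\bA^*$ maximizes this quantity over all balanced allocations. By \Cref{prop:bivalued-fPO}, $\bA^*$ is fPO.

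\textbf{Step 3: EF1, by contradiction.} Suppose $i$ is not EF1 toward $i'$, and let $c$ be the number of goods in $A^*_{i'}$ that $i$ values at $a_i$. Since $|A^*_i|=|A^*_{i'}|=k$, we have $c\ge1$. Removing one $i$-high good from $A^*_{i'}$, the EF1 violation gives $(a_i-b_i)(c-1-h_i)>b_i\ge 0$, so $c\ge h_i+2$. In particular $h_i<k$, so $A^*_i$ contains a good $j'$ with $v_{ij'}=b_i$. There are two cases.

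\emph{Case (a).} Some $j\in A^*_{i'}$ has $v_{ij}=a_i$ and $v_{i'j}=b_{i'}$. Swap $j$ and $j'$ between $i$ and $i'$. Then $h_i$ increases by one and $h_{i'}$ does not decrease, because $i'$ gives up a low good. So $H$ strictly increases, contradicting Step 1.

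\emph{Case (b).} All $c$ such goods are also high for $i'$, so $h_{i'}\ge c\ge h_i+2$. Swap one of them with $j'$. Then $h_i$ increases by one and $h_{i'}$ drops by at most one. If $h_{i'}$ does not drop, $H$ increases, again a contradiction. Otherwise $H$ is unchanged, and the change in $\sum g$ is
\begin{align}
(k-h_i)-(k-h_{i'}+1)=h_{i'}-h_i-1\ge 1>0,
\end{align}
so $F$ strictly increases, contradicting the maximality of $X^*$.

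The main obstacle is Step 1. I need to justify carefully that the matching weight depends on the allocation only through $F(\bA)$, i.e.\ that an optimal matching puts high goods in the top slots. I also need the bound $\epsilon\sum_i g(h_i)<1$, which is what makes the lexicographic reading valid. Once that is in place, the two exchange cases are routine.
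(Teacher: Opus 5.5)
Your proof is correct. For fPO it is essentially the paper's argument: the slot bonus totals at most $\epsilon nk(k+1)/2=1/2$, and differences of $\sum_i v_i(A_i)/(a_i-b_i)$ are integers.

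For EF1 you take a different route. The paper never computes the best slot arrangement; it works directly with $X^*$. It takes a $b_i$-good $j_0\in A^*_i$ sitting in a slot $s_0\ge k-h_i$ and an $a_i$-good $j'_0\in A^*_{i'}$ sitting in a slot $s'_0\le k-c+1$ (both by pigeonhole), and swaps them in place. Agent $i$'s edge weight rises by exactly $1+s_0\epsilon$. Agent $i'$'s falls by at most $1+s'_0\epsilon$, however $i'$ values the two goods. So the total weight rises by at least $(s_0-s'_0)\epsilon\ge(c-h_i-1)\epsilon\ge\epsilon$, and no case split is needed.

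You instead first collapse the matching weight to the allocation-level potential $F(\bA)=C+H(\bA)+\epsilon\sum_i g(h_i)$ and read it lexicographically. That forces your two cases, according to whether the swapped good is high for $i'$. In exchange, you get a clearer picture of what the algorithm does. It maximizes the normalized number of high goods, then a strictly concave, hence equalizing, function of the counts $h_i$. This explains both the $s\epsilon$ bonuses and the choice of $\epsilon$.

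The step you flag as the main obstacle is immediate. Any bijection between $i$'s slots and $A_i$ is allowed, so $F(\bA)$ is by definition the maximum weight of a matching inducing $\bA$. Hence $w(X^*)\le F(\bA^*)\le\max_{\bA}F(\bA)=\max_X w(X)=w(X^*)$. You never need to know how $X^*$ itself arranges its slots.
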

\begin{appendixproof}[Proof of \Cref{lem:bivalued-EF1fPO}]
First, we show that $\bA^*$ is EF1.
Suppose to the contrary that $\bA^*$ is not EF1.
Then, there exist agents $i,i'\in N$ such that 
\begin{align}
v_i(A^*_i)<\min_{j\in A^*_{i'}}v_{i}(A^*_{i'}\setminus\{j\}). \label{eq:notEF1}
\end{align}
Let $s=|\{j\in A^*_i\mid v_{ij}=a_i\}|$ and $s'=|\{j\in A^*_{i'}\mid v_{ij}=a_i\}|$. 
By \eqref{eq:notEF1} and the fact that $\bA^*$ is balanced, we have $s'\ge s+2$. 
Let $j_0 \in A_i^*$ be a good with $v_{ij_0}=b_i$ that maximizes the slot index $s_0$ such that $((i,s_0),j_0)\in X^*$.
Also, let $j'_0 \in A_{i'}^*$ be a good with $v_{ij'_0}=a_i$ that minimizes the slot index $s'_0$ such that $((i,s'_0),j'_0)\in X^*$.
Note that such goods exist. Indeed, $k\ge s'\ge s+2$ implies $k-s>0$ and $s'>0$.

By the choice of $s_0$ and $s'_0$, we have $s_0\ge k-s$ and $s'_0\le k-s'+1$.
Thus, we get
\begin{align}
s_0\ge k-s\ge k-(s'-2)\ge s'_0+1.
\end{align}
Consider the modified matching 
\begin{align}
\MoveEqLeft[8]
X'\coloneqq (X^*\setminus\{\big((i,s_0),j_0\big),\big((i',s'_0),j'_0\big)\})\cup\{\big((i,s_0),j'_0\big),\big((i',s'_0),j_0\big)\}.
\end{align}
Then, we have
\begin{align}
\sum_{e\in X'}w(e)-\sum_{e\in X^*}w(e) &= w((i,s_0), j'_0)-w((i,s_0), j_0) + w((i',s'_0), j_0) - w((i',s'_0), j'_0)\\
&\ge \left(\tfrac{a_i}{a_i-b_i}+s_0\cdot\epsilon\right)-\tfrac{b_i}{a_i-b_i}+\tfrac{b_{i'}}{a_{i'}-b_{i'}}-\left(\tfrac{a_{i'}}{a_{i'}-b_{i'}}+s'_0\cdot\epsilon\right)\\
&\ge s_0\cdot\epsilon - s'_0\cdot\epsilon\\ 
&\ge \epsilon \\
&> 0.
\end{align}
This contradicts the optimality of $X^*$.

%

Next, we demonstrate that $\bA^*$ is fPO.
For each balanced allocation $\bA$ and a corresponding perfect matching $X\subseteq E$, we have
\begin{align}
\sum_{i\in N}\frac{v_i(A^*_i)}{a_i-b_i}
&\ge \sum_{e\in X^*}w(e)-n\cdot \frac{k(k+1)}{2} \epsilon\\
&= \sum_{e\in X^*}w(e)-\frac{1}{2}\\
&\ge \sum_{e\in X}w(e)-\frac{1}{2}\\
&\ge \sum_{i\in N}\frac{v_i(A_i)}{a_i-b_i}-\frac{1}{2}.
\end{align}
Note that 
\begin{align}
\sum_{i\in N}\frac{v_i(A^*_i)}{a_i-b_i}-\sum_{i\in N}\frac{v_i(A_i)}{a_i-b_i}
&=\sum_{i\in N}\frac{v_i(A^*_i)-v_i(A_i)}{a_i-b_i}
\end{align}
is an integer because $\frac{v_{ij}-v_{ij'}}{a_i-b_i}\in\{0,1,-1\}$ for all $j,j'\in M$.
Hence, $\sum_{i\in N}\frac{v_i(A_i^*)}{a_i-b_i}\ge \sum_{i\in N}\frac{v_i(A_i)}{a_i-b_i}$.
Thus, by \Cref{prop:fPO}, the balanced allocation $\bA^*$ must be fPO.
\end{appendixproof}

Now we are ready to prove \Cref{thm:bivalued}.
\begin{proof}[Proof of \Cref{thm:bivalued}]
We can compute the maximum-weight perfect matching $X^*$ in the bipartite graph in polynomial time by the Hungarian algorithm, which runs in polynomial time~\cite{schrijver2003}.
From this matching, we construct the balanced allocation $\bA^*$ defined by 
$A_i^* = \{\, j \in M \mid ((i,s), j) \in X^* \,\}$, 
which can also be obtained in polynomial time. 
By \Cref{lem:bivalued-EF1fPO}, this allocation satisfies both EF1 and fPO. 
Therefore, for agents with personalized bivalued valuations, a balanced allocation that is simultaneously EF1 and fPO always exists and can be computed in polynomial time.
\end{proof}

\section{Two Types Case}
In this section, we consider instances in which agents are partitioned into at most two types, referred to as type-1 and type-2 agents. This structure serves as a tractable intermediate case between the setting of identical agents and fully heterogeneous populations.

We begin by revisiting the simplest case, where all agents belong to a single type; that is, every agent shares the same valuation function ($v_i = u$ for all $i \in N$). In this setting, the round-robin allocation is not only guaranteed to be EF1, but it also trivially satisfies fPO. 
This is because, with identical valuations, every complete allocation maximizes utilitarian social welfare.

We now turn to the case with two types. 
For each type $t \in \{1,2\}$, let $N_t$ denote the set of agents of type $t$, with $n_t = |N_t|$ and $n = n_1 + n_2$ the total number of agents. Without loss of generality, we assume $N_1 = \{1, \dots, n_1\}$ and $N_2 = \{n_1 + 1, \dots, n\}$. All agents of the same type share an identical valuation function: for each $i \in N_t$, $v_i = u_t$.

Our main result in this section is the following.
\begin{theorem}\label{thm:2types}
When the number of agent types is at most two, there always exists a balanced allocation that is both EF1 and fPO. 
Moreover, such an allocation can be found in polynomial time.
\end{theorem}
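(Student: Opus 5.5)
We plan to follow the abstract's hint: keep a weight vector in which all agents of one type share a weight, and maintain an optimal primal–dual pair for LP~\eqref{eq:primal}, moving the weights until the allocation becomes $\bp$-EF1. Then \Cref{lem:pEF1} gives EF1, and \Cref{prop:fPO} gives fPO because the weights stay positive.

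We normalize $\balpha$ by setting $\alpha_i=1$ for $i\in N_1$ and $\alpha_i=\lambda$ for $i\in N_2$, where $\lambda>0$ is the single parameter. For fixed $\lambda$, agents of the same type are interchangeable in LP~\eqref{eq:primal}. So an optimal allocation splits into two steps. First, choose the set $S$ of $n_1k$ goods going to type~1, maximizing $u_1(S)+\lambda u_2(M\setminus S)$; this is a sort by $u_1(j)-\lambda u_2(j)$. Second, distribute $S$ among the agents of $N_1$ and $M\setminus S$ among the agents of $N_2$ freely. Within each type we use a round-robin-like distribution of goods sorted by price. \Cref{lem:optimal-dual} gives potentials $(\bq,\bp)$, computed by shortest paths in $G^{(\balpha)}_{\bA}$, that are nonnegative and independent of the choice of optimal $\bA$. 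Agents of a common type get equal $q_i$. Within a type, $p_j=\alpha u_t(j)-q$ orders goods exactly as $u_t$ does. Distributing by round robin in decreasing price order therefore makes the allocation $\bp$-EF1 within each type, since the standard round-robin argument applies to the additive function $p$.

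The remaining issue is $\bp$-EF1 across types. At $\lambda\to 0$, type~1 receives its favourite goods and the type-2 goods are essentially worthless in price. At very large $\lambda$ the roles are reversed. Let $\Delta(\lambda)$ measure the violation, for instance comparing $\min_{i\in N_1}p(A_i)$ with $\max_{i'\in N_2}\hat p(A_{i'})$ and symmetrically. At the small endpoint type~1 is "rich" in prices, and at the large endpoint type~2 is. We increase $\lambda$ through the finitely many breakpoints where the sorted order of $u_1(j)-\lambda u_2(j)$ changes. There are $O(m^2)$ such breakpoints, which gives a polynomial bound. Between breakpoints the allocation is fixed and the potentials vary continuously and piecewise linearly in $\lambda$, via the shortest-path characterization. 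At a breakpoint, two goods $j,j'$ with $u_1(j)-\lambda u_2(j)=u_1(j')-\lambda u_2(j')$ can be swapped one at a time between types. Each swap changes one type-1 bundle and one type-2 bundle by a single good, and both the old and the new allocation are optimal with the same potentials. We stop at the first moment where neither type is envious in price terms; this exists by an intermediate-value argument over the discrete sequence of single swaps.

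The main obstacle is this crossing argument. We must show that a single swap at a breakpoint cannot jump from "type~1 violates $\bp$-EF1 towards type~2" directly to the reverse violation. We also need the within-type round-robin structure to be maintainable after each swap, by reassigning only the affected goods and not disturbing the EF1-type inequalities. We would first try the following. Swapping a good in at a time changes $p(A_i)$ by the difference of two prices $p_j-p_{j'}$. Since $\hat p$ removes the maximum-price good, a one-good change can shift the slack by at most one good's price, which is exactly the tolerance EF1 allows. This should let both inequalities hold simultaneously at the crossing. A careful choice of which type-1 and type-2 agents exchange goods (the poorest and the richest, respectively) should keep the within-type price-EF1 invariant.
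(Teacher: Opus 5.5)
The overall strategy matches the paper's. You use weight $1$ on $N_1$ and $\lambda$ on $N_2$, and breakpoints where the order of $u_{1j}-\lambda u_{2j}$ changes, which are exactly the critical values. You redistribute within each type by round-robin on prices, use continuity between breakpoints, and do single-good swaps at a breakpoint, with fPO kept because every intermediate partition is tight for the same potentials.

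\textbf{The gap.} The step you label the ``main obstacle'' is the actual content of the proof, and you give only a heuristic for it. Two ingredients are missing.

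\textbf{First missing ingredient: an invariant.} With round-robin by price, $p(X_1)\ge\dots\ge p(X_{n_1})\ge\hat p(X_1)\ge\dots\ge\hat p(X_{n_1})$ for the type-1 bundles. The same chain holds for the type-2 bundles $Y_1,\dots,Y_{n_2}$. Consequently at least one of the following always holds:
(a) $p(X_{n_1})\ge\hat p(Y_1)$;
(b) $p(Y_{n_2})\ge\hat p(X_1)$.
Indeed, if (a) fails then $\hat p(X_1)\le p(X_{n_1})<\hat p(Y_1)\le p(Y_{n_2})$. Continuity makes the parameter sets where (a), resp.\ (b), holds closed, and this invariant makes them cover each interval. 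Connectedness then forces them to intersect once (a) holds at the left end and (b) at the right end. A single violation measure $\Delta(\lambda)$ gives you none of this; without the covering property your sweep could pass through a region where both conditions fail.

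\textbf{Second missing ingredient: a one-swap lemma.} After exchanging $j_1\in S$ with $j_2\in T$, re-run round-robin \emph{from scratch} in both types. Each sorted price list shifts by at most one position, so $p(X'_{n_1})\ge\hat p(X_1)$ and $p(Y_{n_2})\ge\hat p(Y'_1)$. Hence if (b) fails before a swap, then $\hat p(Y'_1)\le p(Y_{n_2})<\hat p(X_1)\le p(X'_{n_1})$, so (a) holds after it. Therefore the first swap index at which (b) holds also satisfies (a).

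\textbf{Your proposed swap mechanism does not work as stated.} You suggest editing only two bundles, between the poorest type-1 and richest type-2 agent. But the goods that must cross are dictated by $S^{(\ell)}\setminus S^{(\ell+1)}$ and $S^{(\ell+1)}\setminus S^{(\ell)}$, and they may sit in any bundle. Local edits also destroy the round-robin chain that the invariant rests on. Re-running round-robin costs nothing, since optimality at the breakpoint depends only on the partition $(S,T)$.

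\textbf{Your endpoint claim is also false as stated.} If $u_1$ is constant, every type-1 good gets price $0$, while $\hat p$ of a type-2 bundle can be positive, so (a) can fail at small $\lambda$. The correct argument works at a fixed small $\lambda>0$, not a limit. There type~1 holds its top $n_1k$ goods and has no real envy toward type~2. So either (b) holds and the allocation is already EF1, or (b) fails and (a) holds by the invariant. The symmetric argument applies at large $\lambda$.
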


To obtain such an fPO balanced allocation, we only consider weight vectors $\balpha\in\mathbb{R}_{++}^N$ such that $\alpha_i=1$ for all $i\in N_1$ and $\alpha_i=\gamma$ for all $i\in N_2$, where $\gamma\in\mathbb{R}_{++}$.

Without loss of generality, we may assume $u_{tj} \ne u_{tj'}$ for some $t\in\{1,2\}$ and $j, j' \in M$. Otherwise, the problem is trivial since every balanced allocation is both EF and fPO.
We denote
\begin{align}
\delta \coloneqq \frac{\min_{t\in\{1,2\},\, j,j'\in M:\,u_{tj}\ne u_{tj'}}|u_{tj}-u_{tj'}|}{1+\max_{t\in\{1,2\}}\max_{j \in M} u_{tj}}~(>0). \label{eq:delta}
\end{align}

We search $\gamma$ for which the resulting balanced allocation $\bA$ is EF1.
Recall that a balanced allocation $\bA$ maximizes $\sum_{i\in N}\alpha_iv_i(A_i)$ if and only if $G^{(\balpha)}_{\bA}$ contains no negative-weight directed cycles.
In what follows, we write $G^{(\gamma)}_{\bA}$ instead of $G^{(\balpha)}_{\bA}$ for $\balpha=(1,\dots,1,\gamma,\dots,\gamma)$.
%
Define
\begin{align}
\mathcal{C} 
&= \{\gamma_1, \gamma_2, \ldots, \gamma_L\}\\
&= \left\{\tfrac{u_{1j}-u_{1j'}}{u_{2j}-u_{2j'}} \ \middle|\  j,j'\in M,~u_{1j}> u_{1j'}, u_{2j}> u_{2j'}\right\} \label{eq:critical-values}
\end{align}
to be the set of \emph{critical values} of $\gamma$, where $\delta < \gamma_1 < \gamma_2 < \cdots < \gamma_L < 1/\delta$. 
Intuitively, a critical value $\gamma$ represents a point at which a zero cycle appears in $G_{\bA}^{(\gamma)}$.
Note that $L=O(m^2)$ since there are at most $O(m^2)$ pairs of goods $(j,j')$.
We define the intervals between these critical values as $I_1 = [\delta, \gamma_1]$, $I_2 = [\gamma_1, \gamma_2]$, $\ldots$, $I_{L+1} = [\gamma_L, 1/\delta]$. We will denote $\gamma_0\coloneqq \delta$ and $\gamma_{L+1}\coloneqq 1/\delta$.
If $\mathcal{C}=\emptyset$, we define $L=0$.
\begin{lemma}\label{lem:critical-values}
For each $\ell\in[L+1]$, if $\bA$ maximizes $\sum_{i\in N_1}u_{1}(A_i)+\sum_{i\in N_2}\gamma u_{2}(A_i)$ for some $\gamma\in (\gamma_{\ell-1},\gamma_\ell)$, then it also maximizes the value for any $\gamma\in[\gamma_{\ell-1},\gamma_\ell]=I_\ell$.
\end{lemma}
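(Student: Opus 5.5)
Fix $\ell$ and a balanced allocation $\bA$ that maximizes $f_\gamma(\bA')\coloneqq\sum_{i\in N_1}u_1(A'_i)+\gamma\sum_{i\in N_2}u_2(A'_i)$ for some $\gamma^\circ\in(\gamma_{\ell-1},\gamma_\ell)$. We prove the conclusion by contradiction, using that for each fixed balanced allocation $\bA'$ the difference $g_{\bA'}(\gamma)\coloneqq f_\gamma(\bA)-f_\gamma(\bA')$ is affine in $\gamma$. Write $g_{\bA'}(\gamma)=P+\gamma Q$, where
\[
P=\sum_{i\in N_1}u_1(A_i)-\sum_{i\in N_1}u_1(A'_i),\qquad Q=\sum_{i\in N_2}u_2(A_i)-\sum_{i\in N_2}u_2(A'_i).
\]
Suppose $\bA$ fails to be optimal at some $\gamma'\in I_\ell$. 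Then some balanced $\bA'$ has $g_{\bA'}(\gamma')<0$, while $g_{\bA'}(\gamma^\circ)\ge 0$. By continuity there is $\gamma''$ between $\gamma^\circ$ and $\gamma'$, hence in $I_\ell$, where the sign changes. We will turn this into a structural zero-weight cycle and show it forces a critical value strictly inside $(\gamma_{\ell-1},\gamma_\ell)$, which is impossible.

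\textbf{Step 1: reduce to a cycle.} Working in $G^{(\gamma)}_{\bA}$, $\bA$ is optimal at $\gamma$ iff the graph has no negative cycle. So non-optimality at $\gamma'$ gives a negative cycle $C$ there. Its weight $w_\gamma(C)$ is again affine in $\gamma$, say $w_\gamma(C)=-P_C-\gamma Q_C$, and optimality at $\gamma^\circ$ gives $w_{\gamma^\circ}(C)\ge 0$. Because agents of the same type are interchangeable, we can shortcut through them. Traversing $C$, each good is handed from one agent to the next. If two consecutive receiving agents on $C$ have the same type, the cycle decomposes, or shortcuts, into smaller cycles. Hence $C$ splits into simple cycles, and one of them is negative at $\gamma'$ while all are nonnegative at $\gamma^\circ$ (each is a cycle of $G^{(\gamma^\circ)}_{\bA}$). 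It then suffices to analyze a simple exchange cycle. Modulo same-type agents, which contribute terms of only one type, the net effect of such a cycle is a swap of a single pair of goods $j,j'$ between a type-1 agent and a type-2 agent: the type-1 side gains $u_{1j}-u_{1j'}$ and the type-2 side gains $\gamma(u_{2j'}-u_{2j})$. I expect this reduction to be the main obstacle. The plan is to argue that a minimal negative cycle alternates types exactly twice, using interchangeability of same-type agents: any pure-type-1 segment is a cycle of type-1 valuations with nonnegative weight, so it can be removed.

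\textbf{Step 2: sign analysis for a single swap.} For a swap, the weight is
\[
w_\gamma=-(u_{1j}-u_{1j'})+\gamma\,(u_{2j}-u_{2j'})
\]
up to orientation. There are three cases.
\begin{itemize}
\item[(i)] If the two brackets have opposite signs or one of them is zero, the weight has constant sign on $\mathbb{R}_{++}$. Since it is nonnegative at $\gamma^\circ$, it is nonnegative everywhere. Note that if one bracket is zero and the other is nonzero, the weight is constant.
\item[(ii)] If both brackets are positive, the weight vanishes exactly at $\gamma=(u_{1j}-u_{1j'})/(u_{2j}-u_{2j'})\in\mathcal{C}$.
\item[(iii)] If both brackets are negative, swapping $j$ and $j'$ reduces to case (ii).
\end{itemize}
In cases (ii) and (iii), a sign change between $\gamma^\circ$ and $\gamma'\in I_\ell$ would place a root of the weight strictly inside $(\gamma_{\ell-1},\gamma_\ell)$ or at $\gamma'$ itself. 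A root at $\gamma'$ means zero weight there, not negative. In the interior, no element of $\mathcal{C}$ lies strictly between consecutive critical values. Both alternatives contradict the existence of a negative cycle at $\gamma'$.

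A cleaner alternative avoiding Step 1 is to work directly with the set $\Gamma\coloneqq\{\gamma: \bA\text{ optimal}\}$, which is a closed interval as an intersection of half-lines $\{P+\gamma Q\ge 0\}$. If an endpoint $\gamma^\ast$ of $\Gamma$ lay in the open interval $(\gamma_{\ell-1},\gamma_\ell)$, then at $\gamma^\ast$ there would be a zero cycle that becomes negative beyond it. By Step 1's decomposition, this yields a zero swap at a non-critical $\gamma^\ast$, which is impossible. Hence $\Gamma\supseteq(\gamma_{\ell-1},\gamma_\ell)$, and closedness gives $\Gamma\supseteq I_\ell$.
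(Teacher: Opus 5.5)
There is a genuine gap in Step 1, which you flagged yourself as the main obstacle. The reduction from a negative cycle to a single type-1/type-2 swap is asserted, not proved, and the justification you sketch does not work. Your Step 2 is correct and matches how the paper concludes. Shortcutting through consecutive same-type agents (a segment $i\to j\to i'$ with $j\in A_{i'}$ and $i,i'$ of the same type has weight $0$) only produces a cycle whose agents alternate types $1,2,1,2,\dots$. Such a cycle can be simple with four or more agents. For example, take $i_1\to j_1\to i_2\to j_2\to i_3\to j_3\to i_4\to j_4\to i_1$ with $i_1,i_3\in N_1$ and $i_2,i_4\in N_2$. It has no consecutive same-type agents and no ``pure-type segment'' to delete, yet its net effect moves two goods of $S$ to type-2 agents and two goods of $T$ to type-1 agents. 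So your claim that the net effect of a simple exchange cycle is a single swap is false as stated. Likewise, ``a minimal negative cycle alternates types exactly twice'' is left without a valid argument. Your alternative via $\Gamma$ relies on the same Step 1, so it inherits the gap.

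The missing idea is that the weight of an alternating cycle depends only on which goods it moves across types. This holds because all type-1 agents share weight $1$ and valuation $u_1$, and all type-2 agents share weight $\gamma$ and valuation $u_2$. If the cycle moves $s_1,\dots,s_h\in S$ to type-2 agents and $t_1,\dots,t_h\in T$ to type-1 agents, then
\[
w_\gamma(C)=\sum_{\kappa=1}^{h}\Bigl[(u_{1s_\kappa}-\gamma u_{2s_\kappa})-(u_{1t_\kappa}-\gamma u_{2t_\kappa})\Bigr]
\]
for \emph{any} pairing $s_\kappa\leftrightarrow t_\kappa$. Each summand is the weight of the two-agent cycle in $G_{\bA}$ that swaps $s_\kappa$ and $t_\kappa$ between their owners. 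Hence $w_{\gamma'}(C)<0$ yields a single swap that is negative at $\gamma'$. Being a cycle of $G_{\bA}$, that swap is nonnegative at $\gamma^\circ$, and your Step 2 then finishes the proof.

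This averaging is exactly the paper's key step, but the paper bypasses $G_{\bA}$ entirely. It writes the objective as $u_1(S)+\gamma u_2(T)$, pairs $S\setminus S'$ with $S'\setminus S$ for an allocation $\bA'$ that beats $\bA$ at $\gamma'$, and picks one pair whose swap already beats $\bA$ at $\gamma'$. This also shows why your aggregate root $-P/Q$ is only a mediant of single-swap ratios rather than a critical value itself.
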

\begin{appendixproof}[Proof of \Cref{lem:critical-values}]
Suppose that $\bA$ maximizes $\sum_{i\in N_1}u_{1}(A_i)+\sum_{i\in N_2}\gamma u_{2}(A_i)$ for some $\gamma^*\in (\gamma_{\ell-1},\gamma_\ell)$.
Assume to the contrary that there exists $\gamma'\in[\gamma_{\ell-1},\gamma_\ell]$ such that $\bA$ does not maximize $\sum_{i\in N_1}u_{1}(A_i)+\sum_{i\in N_2}\gamma' u_{2}(A_i)$. That is, there exists another balanced allocation $\bA'$ such that 
$\sum_{i\in N_1}u_{1}(A'_i)+\sum_{i\in N_2}\gamma' u_{2}(A'_i)>\sum_{i\in N_1}u_{1}(A_i)+\sum_{i\in N_2}\gamma' u_{2}(A_i)$.

Let $S=\bigcup_{i\in N_1}A_i$ and $S'=\bigcup_{i\in N_1}A'_i$, and define $S\setminus S'=\{j_1,j_2,\dots,j_s\}$ and $S'\setminus S=\{j'_1,j'_2,\dots,j'_s\}$.
By definition, the difference in objective value is
\begin{align}
\sum_{\kappa=1}^su_{1j'_\kappa}+\gamma' \sum_{\kappa=1}^su_{2j_\kappa}>\sum_{\kappa=1}^su_{1j_\kappa}+\gamma' \sum_{\kappa=1}^su_{2j'_\kappa}.
\end{align}
Thus, for some $\kappa^*\in[s]$, we have
\begin{align}
u_{1j'_{\kappa^*}}+\gamma' u_{2j_{\kappa^*}}>u_{1j_{\kappa^*}}+\gamma' u_{2j'_{\kappa^*}}.
\end{align}
On the other hand, since $\bA$ is optimal for $\gamma^*$, it must be that
\begin{align}
u_{1j'_{\kappa^*}}+\gamma^* u_{2j_{\kappa^*}}\le u_{1j_{\kappa^*}}+\gamma^* u_{2j'_{\kappa^*}},
\end{align}
because otherwise swapping $j_{\kappa^*}$ and $j'_{\kappa^*}$ in $\bA$ would strictly improve the objective at $\gamma^*$, contradicting optimality.

Therefore, as $\gamma$ varies from $\gamma^*$ to $\gamma'$, the comparison between $u_{1 j_{\kappa^*}'} + \gamma u_{2 j_{\kappa^*}}$ and $u_{1 j_{\kappa^*}} + \gamma u_{2 j_{\kappa^*}'}$ reverses. By continuity, there exists some $\gamma''$ between $\gamma^*$ and $\gamma'$ (with $\gamma''\ne \gamma'$) such that
\begin{align}
u_{1j'_{\kappa^*}}+\gamma'' u_{2j_{\kappa^*}}= u_{1j_{\kappa^*}}+\gamma'' u_{2j'_{\kappa^*}}.
\end{align}
Thus, we have $\gamma''=\frac{u_{1j_{\kappa^*}}-u_{1j'_{\kappa^*}}}{u_{2j_{\kappa^*}}-u_{2j'_{\kappa^*}}}$, meaning that $\gamma''$ is a critical value.
However, this is a contradiction since $\gamma_{\ell-1}<\gamma''<\gamma_{\ell}$.
Therefore, $\bA$ must maximize the objective for all $\gamma$ in $[\gamma_{\ell-1}, \gamma_\ell]$.
\end{appendixproof}
For each interval $I_\ell$, let $\bA^{(\ell)}$ be a balanced allocation that maximizes $\sum_{i\in N_1}u_{1}(A_i)+\sum_{i\in N_2}\gamma u_{2}(A_i)$ for all $\gamma \in I_\ell$.
Let $S^{(\ell)}\coloneqq\bigcup_{i\in N_1}A^{(\ell)}_i$ and $T^{(\ell)}\coloneqq\bigcup_{i\in N_2}A^{(\ell)}_i$.

For each interval $I_\ell$ and for each $\gamma\in I_\ell$, we determine the potentials $(\bq^{(\gamma)},\bp^{(\gamma)}) \in \mathbb{R}_+^N \times \mathbb{R}_+^M$ according to the procedure described in \Cref{lem:optimal-dual}.
Here, we do not need to specify $\ell$ since the potentials do not depend on the choice of the allocation.
Since agents of the same type have identical valuations and identical weights, there exists a path of length $0$ in $G^{(\gamma)}_{\bA^{(\ell)}}$ between any two agents of the same type.
Thus, we have $q^{(\gamma)}_i = q^{(\gamma)}_{i'}$ for any $i, i' \in N_1$ and $q^{(\gamma)}_i = q^{(\gamma)}_{i'}$ for any $i, i' \in N_2$. 
We redistribute the goods in $S^{(\ell)}$ among the type-1 agents, and those in $T^{(\ell)}$ among the type-2 agents, using the round-robin procedure with respect to the price vector $\bp^{(\gamma)}$.
Let $(X^{(\ell,\gamma)}_1, \dots, X^{(\ell,\gamma)}_{n_1})$ and $(Y^{(\ell,\gamma)}_1, \dots, Y^{(\ell,\gamma)}_{n_2})$ denote the resulting allocations for type-1 and type-2 agents, respectively.
We assume that the round-robin is applied in the order of indices.
Then, we have the following relations.
\begin{lemma}\label{lem:rr-price}
For any $\ell\in[L+1]$ and $\gamma\in I_\ell$ we have
\begin{align}
\MoveEqLeft[4]
p^{(\gamma)}(X^{(\ell,\gamma)}_1) \ge \dots \ge p^{(\gamma)}(X^{(\ell,\gamma)}_{n_1})\ge
\hat{p}^{(\gamma)}(X^{(\ell,\gamma)}_1) \ge \dots \ge \hat{p}^{(\gamma)}(X^{(\ell,\gamma)}_{n_1}),
\end{align}
and 
\begin{align}
\MoveEqLeft[4]
p^{(\gamma)}(Y^{(\ell,\gamma)}_1) \ge \dots \ge p^{(\gamma)}(Y^{(\ell,\gamma)}_{n_2})
\ge
\hat{p}^{(\gamma)}(Y^{(\ell,\gamma)}_1) \ge \dots \ge \hat{p}^{(\gamma)}(Y^{(\ell,\gamma)}_{n_2}).
\end{align}
\end{lemma}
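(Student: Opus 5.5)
This is the standard round-robin chain argument, applied to prices instead of valuations. Fix $\ell$ and $\gamma\in I_\ell$, and consider the type-1 side; the type-2 side is identical. The goods of $S^{(\ell)}$ number exactly $n_1k$, because $\bA^{(\ell)}$ is balanced. The round-robin in index order over these goods, with every agent picking a highest-price available good, therefore runs exactly $k$ full rounds. For each agent $i\in[n_1]$ and round $r\in[k]$, let $g_i^r$ be the good agent $i$ picks in round $r$. Let $\bar p_i^r$ denote $p^{(\gamma)}_{g_i^r}$.

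The key observation is that the picks occur in the global order
\[
g_1^1,g_2^1,\dots,g_{n_1}^1,\;g_1^2,\dots,g_{n_1}^2,\;\dots,\;g_1^k,\dots,g_{n_1}^k .
\]
Each pick is a maximum-price good among those still available, so the sequence of prices $\bar p$ along this order is non-increasing. This single monotonicity fact gives everything.

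\textbf{First chain.} For $i<i'$ we have $\bar p_i^r\ge \bar p_{i'}^r$ for every round $r$, since $g_i^r$ is picked before $g_{i'}^r$. Summing over $r$ gives $p^{(\gamma)}(X_i)\ge p^{(\gamma)}(X_{i'})$.

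\textbf{Middle inequality.} The most expensive good in $X_i$ is $g_i^1$ (ties are harmless, since $\hat p$ only subtracts a maximum value). Hence $\hat p^{(\gamma)}(X_i)=\sum_{r=2}^k \bar p_i^r$. Comparing term by term, $\bar p_{n_1}^{r-1}\ge \bar p_1^{r}$ for $r\ge2$, because agent $n_1$'s round-$(r-1)$ pick precedes agent $1$'s round-$r$ pick. Also $\bar p_{n_1}^k\ge0$, since prices are nonnegative by \Cref{lem:optimal-dual}. Therefore
\[
p^{(\gamma)}(X_{n_1})=\sum_{r=1}^{k}\bar p_{n_1}^r\ge\sum_{r=2}^{k}\bar p_1^r=\hat p^{(\gamma)}(X_1).
\]

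\textbf{Last chain.} Summing $\bar p_i^r\ge\bar p_{i'}^r$ for $i<i'$ over $r=2,\dots,k$ gives $\hat p^{(\gamma)}(X_i)\ge\hat p^{(\gamma)}(X_{i'})$.

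There is no real obstacle here. The only points needing care are the following. Balancedness guarantees that every type-1 agent receives exactly $k$ goods, in exactly one pick per round. The identification of $\hat p$ with removal of the first pick must be justified, which holds because the first pick attains the maximum price in the bundle. Finally, $p\ge0$ is used only to discard the term $\bar p_{n_1}^k$. The case $k=1$ is trivial, since then $\hat p\equiv0$.
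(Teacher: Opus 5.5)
Your proof is correct and matches the paper's. The paper sorts $S^{(\ell)}$ by price as $s_1,\dots,s_{kn_1}$ and writes $X_i=\{s_{\iota n_1+i}\}_{\iota=0}^{k-1}$. It then makes the same term-by-term comparisons you do, including dropping the nonnegative last term to get $p(X_{n_1})\ge\hat{p}(X_1)$.
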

\begin{appendixproof}[Proof of \Cref{lem:rr-price}]
For simplicity, we omit the superscripts $(\ell)$, ${(\gamma)}$ and $(\ell,\gamma)$.
By symmetry, it suffices to prove the inequalities for $(X_1,\dots,X_{n_1})$.

Let $S=\{s_1,s_2,\dots,s_{kn_1}\}$ where $p_{s_1}\ge p_{s_2}\ge\dots\ge p_{s_{kn_1}}$.
Then, we have $p(X_i)=\sum_{\iota=0}^{k-1}p_{s_{\iota n_1+i}}$ and $\hat{p}(X_i)=\sum_{\iota=1}^{k-1}p_{s_{\iota n_1+i}}$.
Therefore, for all $i,i'\in N_1$ with $i<i'$, we have $p(X_i)\ge p(X_{i'})$ and $\hat{p}(X_i)\ge \hat{p}(X_{i'})$.
Moreover, we have
\begin{align}
p(X_{n_1})
&=\sum_{\iota=0}^{k-1}p_{s_{\iota n_1+n_1}}
=\sum_{\iota=1}^{k}p_{s_{\iota n_1}}\\
&\ge \sum_{\iota=1}^{k-1}p_{s_{\iota n_1}}
\ge \sum_{\iota=1}^{k-1}p_{s_{\iota n_1+1}}
=\hat{p}(X_{1}).
\end{align}
This completes the proof.
\end{appendixproof}
Let $\hat{\bA}^{(\ell,\gamma)}=(X^{(\ell,\gamma)}_1, \dots, X^{(\ell,\gamma)}_{n_1}, Y^{(\ell,\gamma)}_1, \dots, Y^{(\ell,\gamma)}_{n_2})$.
We will omit the superscripts $(\gamma)$, $(\ell)$ and $(\ell,\gamma)$ when they are clear from the context.

For each interval $I_\ell$, we first check whether $\hat{\bA}^{(\ell,\gamma)}$ is EF1 for $\gamma = \gamma_{\ell-1}$ and $\gamma = \gamma_\ell$.
If either allocation is EF1, we immediately return it as an EF1 and fPO balanced allocation.
If not, we examine the following two conditions for each $\ell \in [L+1]$ and $\gamma \in \{\gamma_{\ell-1},\gamma_\ell\}$:
\begin{description}
    \item[(a)] $p^{(\gamma)}(X^{(\ell,\gamma)}_{n_1}) \ge \hat{p}^{(\gamma)}(Y_1^{(\ell,\gamma)})$,
    \item[(b)] $p^{(\gamma)}(Y^{(\ell,\gamma)}_{n_2}) \ge \hat{p}^{(\gamma)}(X^{(\ell,\gamma)}_1)$.
\end{description}
We will show in \Cref{lem:2types-conserved} that at least one of these two conditions holds for any $(\ell, \gamma)$.
Furthermore, if both conditions hold for some $(\ell, \gamma)$, then $\hat{\bA}^{(\ell,\gamma)}$ is EF1 (\Cref{lem:2types-pEF1}).
Since we have assumed that neither endpoint allocation is EF1, exactly one of the two conditions holds at each endpoint.
Additionally, we will show in \Cref{lem:2types-endpoints} that condition (a) holds at $(\ell, \gamma) = (1, \delta)$ and condition (b) holds at $(\ell, \gamma) = (L+1, 1/\delta)$.
Therefore, as we move through the intervals, at least one of the following two situations must occur:
\begin{itemize}
    \item Case 1: For some $\ell\in[L+1]$, condition (a) holds at $(\ell, \gamma_{\ell-1})$ and condition (b) holds at $(\ell, \gamma_\ell)$.
    \item Case 2: For some $\ell\in[L]$, condition (a) holds at $(\ell, \gamma_\ell)$ and condition (b) holds at $(\ell+1, \gamma_\ell)$.
\end{itemize}
We can find such an $\ell$ in polynomial time by checking the conditions for each $\ell\in[L+1]$ and $\gamma\in\{\gamma_{\ell-1},\gamma_\ell\}$.

If we encounter Case 1, there exists $\gamma^* \in I_\ell$ such that $\hat{\bA}^{(\ell,\gamma^*)}$ satisfies both conditions (a) and (b) (\Cref{lem:2types-case1}). Therefore, $\hat{\bA}^{(\ell,\gamma^*)}$ is EF1 and fPO by \Cref{lem:2types-pEF1}.
We search for such a $\gamma^*$ by tracking the changes as we continuously increase $\gamma$ within the interval $I_\ell$.
Since all agents of the same type share the same $q^{(\gamma)}_i$, there is a shortest path from the root $r$ to each node in $N \cup M$ that uses at most four edges in $G_{\bA^{(\ell)}}$. There are at most $O(m^2)$ such paths. The length of each path is a linear function of $\gamma$, so each $-q^{(\gamma)}_i$ and $p^{(\gamma)}_j$ can be represented as the minimum of $O(m^2)$ linear functions in $\gamma$. Thus, $-q^{(\gamma)}_i$ and $p^{(\gamma)}_j$ are piecewise linear functions with at most $O(m^2)$ segments.
This structure allows us to efficiently track the changes in the potentials as $\gamma$ varies, enabling us to find a suitable $\gamma^*$ that satisfies both conditions (a) and (b) in polynomial time.

If we encounter Case 2, we construct a sequence of allocations by successively exchanging one good from a type-1 agent with one good from a type-2 agent at each step, transitioning from $\hat{\bA}^{(\ell,\gamma_\ell)}$ to $\hat{\bA}^{(\ell+1,\gamma_\ell)}$. Specifically, at each step, we select a pair of goods $(j_1, j_2)$ with $j_1$ assigned to a type-1 agent and $j_2$ assigned to a type-2 agent, and swap their assignments.
Then, we redistribute the goods among type-1 agents and those among type-2 agents using the round-robin procedure with respect to the price vector $\bp^{(\gamma_\ell)}$.
By repeating such single-good exchanges, we can transform $\hat{\bA}^{(\ell,\gamma_\ell)}$ into $\hat{\bA}^{(\ell+1,\gamma_\ell)}$.
We will show that during this process, we reach an allocation that satisfies both conditions (a) and (b), thus yielding an EF1 and fPO balanced allocation (\Cref{lem:2types-case2-fPO,lem:2types-case2-p}).

\begin{fullpaper}
Our algorithm is summarized in \Cref{alg:2types}.
\begin{algorithm}[ht]
\caption{Two types case}\label{alg:2types}
\begin{algorithmic}[1]
\STATE Compute $\delta$ in \eqref{eq:delta} and critical values $\gamma_1,\ldots,\gamma_L$ in \eqref{eq:critical-values}
\FORALL{$\ell\in[L+1]$}\label{line:case0}
    \STATE Compute $\bA^{(\ell)}$ that maximizes $\sum_{i\in N_1}u_{1}(A_i)+\gamma\sum_{i\in N_2}u_{2}(A_i)$ for $\gamma=(\gamma_{\ell-1}+\gamma_{\ell})/2$
    \STATE \textbf{if} { $\hat{\bA}^{(\ell,\gamma_{\ell-1})}$ is EF1 }
        \textbf{then} \textbf{Return} $\hat{\bA}^{(\ell,\gamma_{\ell-1})}$
    \STATE \textbf{if} { $\hat{\bA}^{(\ell,\gamma_{\ell})}$ is EF1 }
        \textbf{then} \textbf{Return} $\hat{\bA}^{(\ell,\gamma_{\ell})}$
\ENDFOR
\IF { $\exists \ell\in [L+1]$ s.t. (a) holds at $(\ell, \gamma_{\ell-1})$ and (b) holds at $(\ell, \gamma_\ell)$}
    \STATE Compute $\gamma^* \in I_\ell$ such that $\hat{\bA}^{(\ell,\gamma^*)}$ satisfies both (a) and (b) by tracking the changes during the process
    \STATE \textbf{Return} $\hat{\bA}^{(\ell,\gamma^*)}$
\ELSE
    \STATE Let $\ell\in[L+1]$ be such that (a) holds at $(\ell, \gamma_\ell)$ and (b) holds at $(\ell+1, \gamma_\ell)$
    \STATE Let $S\gets S^{(\ell)}$ and $T\gets T^{(\ell)}$
    \WHILE { $S\ne S^{(\ell+1)}$ }
        \STATE Let $j_1\in S\cap T^{(\ell+1)}$ and $j_2\in T\cap S^{(\ell+1)}$
        \STATE $S\gets S\setminus\{j_1\}\cup\{j_2\}$;\quad $T\gets T\setminus\{j_2\}\cup\{j_1\}$
        \STATE Construct allocation $\hat{\bA}$ by redistributing goods in $S$ among type-1 agents and those in $T$ among type-2 agents using the round-robin procedure with respect to the price vector $\bp^{(\gamma_\ell)}$
        \STATE \textbf{if} {$\hat{\bA}$ is EF1}
            \textbf{then} \textbf{Return} $\hat{\bA}$
    \ENDWHILE
\ENDIF
\end{algorithmic}
\end{algorithm}
\end{fullpaper}

We now proceed to the proofs.
We first show that if $\hat{\bA}$ satisfies the conditions (a) and (b), then it is EF1.
\begin{lemma}\label{lem:2types-pEF1}
Let $\ell\in[L+1]$ and $\gamma\in I_\ell$.
If $p^{(\gamma)}(X^{(\ell,\gamma)}_{n_1})\ge \hat{p}^{(\gamma)}(Y^{(\ell,\gamma)}_1)$, then the envy from a type-1 agent to a type-2 agent can be eliminated by removing one good.
Similarly, if $p^{(\gamma)}(Y^{(\ell,\gamma)}_{n_2})\ge \hat{p}^{(\gamma)}(X^{(\ell,\gamma)}_1)$, then the envy from a type-2 agent to a type-1 agent can be eliminated by removing one good.
In addition, if both conditions hold, then the allocation is EF1.
\end{lemma}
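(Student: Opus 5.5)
The plan is to reduce everything to the $\bp$-EF1 argument of \Cref{lem:pEF1}, applied pairwise. First I record the structure available at $(\ell,\gamma)$. Write $\bp=\bp^{(\gamma)}$ and $\bq=\bq^{(\gamma)}$. By \Cref{lem:optimal-dual}, $(\bq,\bp)$ is a nonnegative optimal dual solution for the weight vector $\balpha=(1,\dots,1,\gamma,\dots,\gamma)$, and it does not depend on the optimal allocation chosen.

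Next I note that $\hat{\bA}^{(\ell,\gamma)}$ is itself optimal for $\gamma$. It only permutes goods among agents of the same type: the union of type-1 bundles is still $S^{(\ell)}$, and all bundles still have size $k$. Since agents of a type have identical valuations and weights, the objective is unchanged from that of $\bA^{(\ell)}$, which is optimal for every $\gamma\in I_\ell$. Hence complementary slackness applies to $\hat{\bA}^{(\ell,\gamma)}$ and $(\bq,\bp)$: $q_i+p_j=\alpha_iv_{ij}$ whenever $j$ is allocated to $i$. Dual feasibility gives $q_i+p_j\ge\alpha_iv_{ij}$ for all $i$ and $j$.

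Now take a type-1 agent $i$ and a type-2 agent $i'$. By \Cref{lem:rr-price}, $p(X_i)\ge p(X_{n_1})$ and $\hat p(Y_1)\ge\hat p(Y_{i'})$. So condition (a) gives $p(X_i)\ge\hat p(Y_{i'})$. I then repeat the chain of inequalities from the proof of \Cref{lem:pEF1} for this single pair, removing the highest-priced good $j^*$ of $Y_{i'}$:
\begin{align}
v_i(X_i)=\tfrac{kq_i+p(X_i)}{\alpha_i}\ge\tfrac{kq_i+p(Y_{i'}\setminus\{j^*\})}{\alpha_i}\ge v_i(Y_{i'}\setminus\{j^*\}).
\end{align}
The last step uses $q_i\ge0$, $|Y_{i'}\setminus\{j^*\}|=k-1$, and dual feasibility. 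This shows the type-1 to type-2 envy is removable by one good. The type-2 to type-1 case under (b) is symmetric, using the other chain in \Cref{lem:rr-price}.

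Finally, for EF1 when both conditions hold, I also need the same-type pairs. Those follow from \Cref{lem:rr-price} directly: for $i,i'\in N_1$, $p(X_i)\ge p(X_{n_1})\ge\hat p(X_1)\ge\hat p(X_{i'})$, and the same for type 2. So the allocation is $\bp$-EF1, and \Cref{lem:pEF1} gives EF1. (Same-type EF1 also follows from round-robin with identical valuations, but the price argument is uniform.)

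The only step needing care is the optimality of $\hat{\bA}^{(\ell,\gamma)}$. This is what makes complementary slackness available for the redistributed bundles, and everything else is bookkeeping. I would justify it by noting that within-type redistribution preserves the weighted welfare exactly.
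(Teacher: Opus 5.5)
Your proof is correct and follows the same route as the paper: you chain the inequalities of \Cref{lem:rr-price} to get $p(X_i)\ge\hat p(Y_{i'})$ (and the same-type comparisons), then apply the price argument of \Cref{lem:pEF1}. You are slightly more careful than the paper on two points it leaves implicit: that $\hat{\bA}^{(\ell,\gamma)}$ is itself optimal for $\gamma$, which complementary slackness needs, and that \Cref{lem:pEF1} has to be run pairwise to get the one-directional claims.
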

\begin{appendixproof}[Proof of \Cref{lem:2types-pEF1}]
For simplicity, we omit the superscripts ${(\gamma)}$ and $(\ell,\gamma)$.
By symmetry, we only show the first part.
Suppose that $p(X_{n_1})\ge \hat{p}(Y_1)$.
Then, by \Cref{lem:rr-price}, we have
\begin{align}
&p(X_1)\ge \dots\ge p(X_{n_1})\ge \hat{p}(Y_1)\ge \dots\ge \hat{p}(Y_{n_2}).
\end{align}
Thus, we have $p(X_i)\ge \hat{p}(Y_{i'})$ for any $i\in N_1$ and $i'\in N_2$.
Moreover, by \Cref{lem:rr-price}, we have $p(X_i)\ge \hat{p}(X_{i'})$ for any $i,i'\in N_1$.
Therefore, by \Cref{lem:pEF1}, we can eliminate the envy from a type-1 agent to a type-2 agent by removing one good.
\end{appendixproof}

Next, we show that at least one of the conditions (a) or (b) holds for any $\ell\in[L+1]$ and $\gamma\in I_\ell$.
\begin{lemma}\label{lem:2types-conserved}
For any $\ell\in[L+1]$ and $\gamma\in I_\ell$, at least one of $p^{(\gamma)}(X^{(\ell,\gamma)}_{n_1})\ge \hat{p}^{(\gamma)}(Y_1^{(\ell,\gamma)})$ or $p^{(\gamma)}(Y^{(\ell,\gamma)}_{n_2})\ge \hat{p}^{(\gamma)}(X^{(\ell,\gamma)}_1)$ holds.
\end{lemma}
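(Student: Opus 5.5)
Argue by contradiction: assume both fail, i.e. $p(X_{n_1})<\hat p(Y_1)$ and $p(Y_{n_2})<\hat p(X_1)$, omitting superscripts. The strategy is to show that these two inequalities are jointly incompatible with the price structure inherited from Lemma~\ref{lem:rr-price}, using only the round-robin interleaving and no LP facts beyond $\bp\ge 0$.

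First I will make the round-robin structure explicit. Sort $S=\{s_1,\dots,s_{kn_1}\}$ by nonincreasing price and write $\sigma_\iota\coloneqq p_{s_{\iota n_1+1}}$, the largest price in the $(\iota+1)$st round on the type-1 side. Then $\hat p(X_1)=\sum_{\iota=1}^{k-1}\sigma_\iota$. Moreover $p(X_{n_1})=\sum_{\iota=1}^{k}p_{s_{\iota n_1}}\ge \sum_{\iota=1}^{k}\sigma_\iota$, because $s_{\iota n_1}$ comes before $s_{\iota n_1+1}$ in the sorted order. Define $\tau_\iota$ analogously for $T$. This gives
\begin{align}
p(X_{n_1})\ge \sum_{\iota=1}^{k}\sigma_\iota\ge \hat p(X_1),\qquad p(Y_{n_2})\ge \sum_{\iota=1}^{k}\tau_\iota\ge \hat p(Y_1),
\end{align}
which sharpens Lemma~\ref{lem:rr-price} slightly.

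Now chain the two assumed failures: $\hat p(X_1)>p(Y_{n_2})\ge \hat p(Y_1)>p(X_{n_1})\ge \hat p(X_1)$. This is a strict cyclic inequality $\hat p(X_1)>\hat p(X_1)$, which is impossible. Hence at least one of (a) and (b) holds.

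The only delicate point is the middle inequality $p(X_{n_1})\ge\hat p(X_1)$ (and its type-2 analogue). This is exactly the last chain in Lemma~\ref{lem:rr-price}, so it can be cited directly. No property of $\gamma$ or of the optimality of $\bA^{(\ell)}$ is needed; the statement therefore holds for any nonnegative prices and any partition of $M$ into $S$ and $T$ of the right sizes. This generality will matter later for the Case~2 exchange sequence.
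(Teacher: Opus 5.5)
Your argument is correct and is essentially the paper's proof. Both combine $\hat{p}(X_1)\le p(X_{n_1})$ and $\hat{p}(Y_1)\le p(Y_{n_2})$ from \Cref{lem:rr-price} with the negated conditions; the paper argues directly (if (a) fails, then $\hat{p}(X_1)\le p(X_{n_1})<\hat{p}(Y_1)\le p(Y_{n_2})$, so (b) holds) instead of by contradiction. One cosmetic slip: $\sigma_k=p_{s_{kn_1+1}}$ is undefined because $|S|=kn_1$, so either set $\sigma_k\coloneqq 0$ (using $\bp\ge 0$) or simply cite \Cref{lem:rr-price}, as you ultimately do.
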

\begin{appendixproof}[Proof of \Cref{lem:2types-conserved}]
For simplicity, we omit the superscript $(\ell,\gamma)$.
Suppose that we have $p(X_{n_1})<\hat{p}(Y_1)$. Then, we have
\begin{align}
\hat{p}(X_1)
\le p(X_{n_1})
<\hat{p}(Y_1)
\le p(Y_{n_2}),
\end{align}
where the first and the last inequalities hold by \Cref{lem:rr-price}.
Hence, we obtain $p(Y_{n_2})\ge \hat{p}(X_1)$.
\end{appendixproof}

We show that if $\hat{\bA}^{(1,\delta)}$ is not EF1, then condition (a) holds for $(\ell,\gamma)=(1,\delta)$, and if $\hat{\bA}^{(L+1,1/\delta)}$ is not EF1, then condition (b) holds for $(\ell,\gamma)=(L+1,1/\delta)$.
\begin{lemma}\label{lem:2types-endpoints}
If $\hat{\bA}^{(1,\delta)}$ is not EF1, then $p^{(\delta)}(X^{(1,\delta)}_{n_1})\ge \hat{p}^{(\delta)}(Y_1^{(1,\delta)})$.
Also, if $\hat{\bA}^{(L+1,1/\delta)}$ is not EF1, then $p^{(1/\delta)}(Y^{(L+1,1/\delta)}_{n_2})\ge \hat{p}^{(1/\delta)}(X^{(L+1,1/\delta)}_1)$.
\end{lemma}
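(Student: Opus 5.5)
The plan is to prove something slightly stronger than stated: at $\gamma=\delta$ condition (a) always holds, and at $\gamma=1/\delta$ condition (b) always holds. The ``not EF1'' hypothesis will not be used. The route is to show that at $\gamma=\delta$ every good held by type-1 agents has price at least that of every good held by type-2 agents, and then sum prices.

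\emph{Step 1 (structure of $\bA^{(1)}$).} We show that $S=S^{(1)}$ consists of goods that are top for $u_1$, i.e., $u_{1s}\ge u_{1t}$ for all $s\in S$ and $t\in T=T^{(1)}$. Suppose instead $u_{1t}>u_{1s}$ for some such pair. By the definition of $\delta$ in \eqref{eq:delta}, $u_{1t}-u_{1s}\ge \delta(1+\max u)>\delta\max_j u_{2j}\ge \delta(u_{2t}-u_{2s})$. Hence swapping $s$ and $t$ between a type-1 holder and a type-2 holder strictly increases $\sum_{i\in N_1}u_1(A_i)+\delta\sum_{i\in N_2}u_2(A_i)$. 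This contradicts the optimality of $\bA^{(1)}$ at $\gamma=\delta\in I_1$.

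\emph{Step 2 (price comparison).} Let $(\bq,\bp)=(\bq^{(\delta)},\bp^{(\delta)})$ be the optimal nonnegative potentials from \Cref{lem:optimal-dual}, and let $q_1$ be the common value of $q_i$ over $i\in N_1$. By complementary slackness, for $s\in S$ (held by a type-1 agent) we have $q_1+p_s=u_{1s}$. Dual feasibility for the same agent and $t\in T$ gives $q_1+p_t\ge u_{1t}$. Subtracting,
\begin{align}
p_s\ge p_t+u_{1s}-u_{1t}\ge p_t,
\end{align}
using Step 1. Therefore $\min_{s\in S}p_s\ge\max_{t\in T}p_t$.

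\emph{Step 3 (conclude (a)).} We have $X_{n_1}^{(1,\delta)}\subseteq S$ with $|X_{n_1}|=k$, and $Y_1^{(1,\delta)}\subseteq T$ with $|Y_1|=k$. Hence
\begin{align}
p(X_{n_1})\ge k\max_{t\in T}p_t\ge p(Y_1)\ge \hat p(Y_1),
\end{align}
where the last inequality uses $\bp\ge 0$. This is (a).

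\emph{Second claim.} At $\gamma=1/\delta$ the objective is $\frac1\delta\bigl(\delta\sum_{N_1}u_1+\sum_{N_2}u_2\bigr)$, so the roles of the two types are exchanged. Step 1 applies verbatim to show that $T^{(L+1)}$ consists of $u_2$-top goods. Step 2 uses $q_2+p_t=\tfrac1\delta u_{2t}$ for $t\in T$ together with feasibility $q_2+p_s\ge\tfrac1\delta u_{2s}$, which gives $p_t\ge p_s$. Step 3 then yields (b).

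The only delicate point is Step 1, specifically checking that $\delta$ is small enough: the minimum nonzero $u_1$-gap must exceed $\delta\max u_2$, and the factor $1+\max u$ in the denominator of \eqref{eq:delta} guarantees exactly this. When $u_{1s}=u_{1t}$, the inequality $u_{1s}\ge u_{1t}$ holds trivially, so ties cause no problem.
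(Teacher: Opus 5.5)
Your Step 2 has a sign error, and the stronger claim it is meant to support is false. From $q_1+p_s=u_{1s}$ and $q_1+p_t\ge u_{1t}$ you get $p_t\ge u_{1t}-q_1=p_s-(u_{1s}-u_{1t})$, i.e.\ $p_s\le p_t+u_{1s}-u_{1t}$. So the type-1 agent's constraints give only an \emph{upper} bound on $p_s-p_t$. A lower bound comes instead from the type-2 agent holding $t$, namely $p_s-p_t\ge\delta(u_{2s}-u_{2t})$, and this can be negative.

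Here is a concrete counterexample. Take $n_1=n_2=1$, $k=3$, $u_1\equiv 1$ and $u_2=(3,3,1,0,0,0)$. Then $\delta=1/4$, $L=0$, $S^{(1)}=\{4,5,6\}$ and $T^{(1)}=\{1,2,3\}$. The shortest-path potentials at $\gamma=\delta$ are $\bq=(1,1/4)$ and $\bp=(1/2,1/2,0,0,0,0)$. Hence $\min_{s\in S}p_s=0<1/2=\max_{t\in T}p_t$, and $p(X_1)=0<1/2=\hat{p}(Y_1)$, so condition (a) fails at $(1,\delta)$. The lemma survives here only because this allocation is envy-free. The ``not EF1'' hypothesis you set aside is therefore genuinely needed.

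Your Step 1 is correct and is exactly the paper's first step, but you must draw its consequence in terms of values, not prices. Every good in $S^{(1)}$ has $u_1$-value at least that of every good in $T^{(1)}$, and all bundles have size $k$. Hence $u_1(X_i)\ge u_1(Y_{i'})$, so no type-1 agent envies a type-2 agent.

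Now use the hypothesis. Suppose (b) held. Then \Cref{lem:2types-pEF1} makes any type-2-to-type-1 envy removable by one good. Within-type pairs are handled by \Cref{lem:rr-price} together with \Cref{lem:pEF1}. So $\hat{\bA}^{(1,\delta)}$ would be EF1, a contradiction. Thus (b) fails, and \Cref{lem:2types-conserved} gives (a). The case $\gamma=1/\delta$ is symmetric.
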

\begin{appendixproof}[Proof of \Cref{lem:2types-endpoints}]
By symmetry, it suffices to show the first part.
Recall that $\bA^{(1)}$ is an fPO balanced allocation that maximizes $\sum_{i\in N_1}u_{1}(A_i) + \delta\sum_{i\in N_2}u_{2}(A_i)$.
Then, for any good $j\in S^{(1)}$ and $j'\in T^{(1)}$, we have
\begin{align}
0
&\le (u_{1j}+\delta u_{2j'})-(u_{1j'}+\delta u_{2j})\\
&\le u_{1j}-u_{1j'}+\delta u_{2j'}.
\end{align}
Since $\delta u_{2j'}<|u_{1j}-u_{1j'}|$ if $u_{1j}\ne u_{1j'}$ by the definition of $\delta$,
this implies that $u_{1j}\ge u_{1j'}$ for any good $j\in S^{(1)}$ and $j'\in T^{(1)}$.
Thus, $u_1(X^{(1,\delta)}_i)\ge u_1(Y^{(1,\delta)}_{i'})$ for all $i\in [n_1]$ and $i'\in [n_2]$ by $|X_i|=|Y_{i'}|=k$, i.e., there is no envy from a type-1 agent to a type-2 agent.

Suppose that $p^{(1,\delta)}(Y^{(1,\delta)}_{n_2})\ge \hat{p}^{(1,\delta)}(X^{(1,\delta)}_1)$. Then, by \Cref{lem:2types-pEF1}, the envy from type-2 agent to a type-1 agent can be eliminated by removing one good. Since there is no envy from a type-1 agent to a type-2 agent, the allocation $\hat{\bA}^{(1,\delta)}$ must be EF1. 

If $\hat{\bA}^{(1,\delta)}$ is not EF1, we have $p^{(1,\delta)}(Y^{(1,\delta)}_{n_2})< \hat{p}^{(1,\delta)}(X^{(1,\delta)}_1)$. 
Hence, $p^{(1,\delta)}(Y^{(1,\delta)}_{n_2})< \hat{p}^{(1,\delta)}(X^{(1,\delta)}_1)$.
By \Cref{lem:2types-conserved}, this implies that $p^{(1,\delta)}(X^{(1,\delta)}_{n_1})\ge \hat{p}^{(1,\delta)}(Y_1^{(1,\delta)})$.
\end{appendixproof}

We establish the existence of the desired $\gamma^*$ in Case 1.
As a preliminary step, we show the continuity of the prices received by each agent.
\begin{lemma}\label{lem:2types-continuous}
For each $\ell\in[L+1]$, $i\in N_1$, and $i'\in N_2$, the values
$p^{(\gamma)}(X^{(\ell,\gamma)}_{i})$,
$\hat{p}^{(\gamma)}(X^{(\ell,\gamma)}_{i})$,
$p^{(\gamma)}(Y_{i'}^{(\ell,\gamma)})$, and 
$\hat{p}^{(\gamma)}(Y_{i'}^{(\ell,\gamma)})$ 
are continuous with respect to $\gamma$ over $I_\ell$.
\end{lemma}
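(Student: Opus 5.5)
The plan is to show first that the price vector $\bp^{(\gamma)}$ is continuous in $\gamma$ on $I_\ell$. Then I will show that the quantities in the statement are fixed continuous functions of this price vector and of the fixed sets $S^{(\ell)}$ and $T^{(\ell)}$.

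\emph{Continuity of the potentials.} By \Cref{lem:critical-values}, the single allocation $\bA^{(\ell)}$ is optimal for every $\gamma\in I_\ell$. Hence the graph $G^{(\gamma)}_{\bA^{(\ell)}}$ has a fixed edge set, and it contains no negative cycle for any $\gamma\in I_\ell$. Each edge weight is an affine function of $\gamma$.

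By \Cref{lem:optimal-dual}, $-p^{(\gamma)}_j$ is the shortest $r$--$j$ path length in this graph. Since there are no negative cycles, this length can be taken as a minimum over the finitely many simple $r$--$j$ paths. Each such path has affine length in $\gamma$, so $p^{(\gamma)}_j$ is the negative of a minimum of finitely many affine functions. It is therefore continuous, in fact piecewise linear, on $I_\ell$.

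\emph{Continuity of the round-robin values.} By the proof of \Cref{lem:rr-price}, the round-robin bundles depend only on the sorted order of prices within $S=S^{(\ell)}$. Let $p_{(1)}\ge p_{(2)}\ge\dots\ge p_{(kn_1)}$ denote the order statistics of $(p^{(\gamma)}_j)_{j\in S}$. Then
\begin{align}
p^{(\gamma)}(X^{(\ell,\gamma)}_i)=\sum_{\iota=0}^{k-1}p_{(\iota n_1+i)}
\quad\text{and}\quad
\hat p^{(\gamma)}(X^{(\ell,\gamma)}_i)=\sum_{\iota=1}^{k-1}p_{(\iota n_1+i)}.
\end{align}
The $t$-th largest entry of a vector is a continuous function of the vector, since it can be written as a max--min over subsets. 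Hence each order statistic is continuous in $\gamma$, and so are these sums. Tie-breaking plays no role here: whatever tie-breaking is used, the sums are determined by the order statistics alone. The argument for $Y^{(\ell,\gamma)}_{i'}$ over $T^{(\ell)}$ is identical.

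\emph{Main obstacle.} The key subtlety is that the bundles themselves jump as the price ordering changes. Continuity must therefore be obtained through the order-statistic representation, not through a fixed bundle. A second point is to justify that \Cref{lem:optimal-dual} applies with the same $\bA^{(\ell)}$ on the closed interval, including the endpoints. \Cref{lem:critical-values} provides exactly this, together with the fact that \Cref{lem:optimal-dual}'s potentials do not depend on which optimal allocation is used.
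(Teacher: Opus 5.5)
Your proof is correct. The first step matches the paper: $-p^{(\gamma)}_j$ is a minimum of finitely many affine path lengths in the fixed graph $G^{(\gamma)}_{\bA^{(\ell)}}$, which has no negative cycles on all of $I_\ell$.

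For the second step you take a somewhat different route. The paper argues that the round-robin bundles stay constant between points where two prices coincide. At such points only equal-priced goods change hands, so the bundle totals do not jump. You instead write $p^{(\gamma)}(X^{(\ell,\gamma)}_i)$ and $\hat{p}^{(\gamma)}(X^{(\ell,\gamma)}_i)$ as fixed sums of order statistics of $(p^{(\gamma)}_j)_{j\in S^{(\ell)}}$, and then use the max--min representation of order statistics.

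Your version is cleaner and more robust. It is independent of tie-breaking. It also does not rely on the informal claim that the bundles are piecewise constant with finitely many switching points, which is delicate when two prices coincide on a whole subinterval and ties are broken arbitrarily.

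The paper's local-constancy picture has one advantage: it is the view reused in the proof of \Cref{lem:2types-case1} to enumerate breakpoints algorithmically. Your formula supplies the same information, since order statistics of piecewise linear functions are again piecewise linear.
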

\begin{appendixproof}[Proof of \Cref{lem:2types-continuous}]
Recall that in the interval $I_\ell$, the goods $S^{(\ell)}$ and $T^{(\ell)}$ are allocated to types-1 agents and type-2 agents, respectively, using the round-robin procedure with respect to the price vector $p^{(\gamma)}$.

For each $j\in M$, the price $p_j^{(\ell,\gamma)}$ is a piecewise linear and continuous function of $\gamma$ on $I_\ell$, as it is defined as the minimum of finitely many linear functions in $\gamma$ by \Cref{lem:optimal-dual}.
The round-robin allocation of goods among agents of the same type depends only on the ordering of good prices. As $\gamma$ varies, the allocation changes only at points where two good prices coincide. Between such points, the sets $X_i^{(\ell,\gamma)}$ and $Y_{i'}^{(\ell,\gamma)}$ remain constant, and thus their total prices are continuous in $\gamma$. At the breakpoints, the sets may change, but since the exchanged goods have equal prices, the total price assigned to each agent remains continuous.
Therefore, the values $p^{(\gamma)}(X^{(\ell,\gamma)}_{i})$, $\hat{p}^{(\gamma)}(X^{(\ell,\gamma)}_{i})$, $p^{(\gamma)}(Y_{i'}^{(\ell,\gamma)})$, and $\hat{p}^{(\gamma)}(Y_{i'}^{(\ell,\gamma)})$ are continuous functions of $\gamma$ over the interval $I_\ell$.
\end{appendixproof}

Now we can prove the existence of $\gamma^*$ in Case 1.
\begin{lemma}\label{lem:2types-case1}
Take any $\ell$ in $[L+1]$.
Consider the case where
$p^{(\gamma_{\ell-1})}(X^{(\ell,\gamma_{\ell-1})}_{n_1})\ge \hat{p}^{(\gamma_{\ell-1})}(Y_1^{(\ell,\gamma_{\ell-1})})$ and 
$p^{(\gamma_\ell)}(Y^{(\ell,\gamma_\ell)}_{n_2})\ge \hat{p}^{(\gamma_\ell)}(X^{(\ell,\gamma_\ell)}_1)$.
Then, there exists $\gamma^*\in I_\ell$ such that 
$p^{(\gamma^*)}(X^{(\ell,\gamma^*)}_{n_1})\ge \hat{p}^{(\gamma^*)}(Y_1^{(\ell,\gamma^*)})$ and 
$p^{(\gamma^*)}(Y^{(\ell,\gamma^*)}_{n_2})\ge \hat{p}^{(\gamma^*)}(X^{(\ell,\gamma^*)}_1)$.
Moreover, such a $\gamma^*$ can be computed in polynomial time.
\end{lemma}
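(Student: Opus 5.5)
The plan is an intermediate value argument on $I_\ell=[\gamma_{\ell-1},\gamma_\ell]$. Define two functions on $I_\ell$:
\begin{align}
f(\gamma)&\coloneqq p^{(\gamma)}(X^{(\ell,\gamma)}_{n_1})-\hat{p}^{(\gamma)}(Y_1^{(\ell,\gamma)}),\\
g(\gamma)&\coloneqq p^{(\gamma)}(Y^{(\ell,\gamma)}_{n_2})-\hat{p}^{(\gamma)}(X^{(\ell,\gamma)}_1).
\end{align}
By \Cref{lem:2types-continuous}, both $f$ and $g$ are continuous on $I_\ell$. The hypotheses say $f(\gamma_{\ell-1})\ge 0$ and $g(\gamma_\ell)\ge 0$. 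By \Cref{lem:2types-conserved}, $\max\{f(\gamma),g(\gamma)\}\ge 0$ for every $\gamma\in I_\ell$. We want a point where both are nonnegative.

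First, if $g(\gamma_{\ell-1})\ge 0$, take $\gamma^*=\gamma_{\ell-1}$. Otherwise, let
\[
\gamma^*\coloneqq \inf\{\gamma\in I_\ell\mid g(\gamma)\ge 0\}.
\]
This set is nonempty because it contains $\gamma_\ell$, and it is closed by continuity, so $g(\gamma^*)\ge 0$. Since $g(\gamma_{\ell-1})<0$, we have $\gamma^*>\gamma_{\ell-1}$, and $g<0$ on $[\gamma_{\ell-1},\gamma^*)$. On that half-open interval \Cref{lem:2types-conserved} forces $f\ge 0$, and continuity of $f$ then gives $f(\gamma^*)\ge 0$. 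So $\gamma^*$ satisfies both (a) and (b). Equivalently, the closed sets $\{f\ge 0\}$ and $\{g\ge 0\}$ cover the connected interval $I_\ell$ and are both nonempty, so they must intersect.

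For the algorithmic part, I will use the structure already noted in the text. For fixed $\ell$, each $p^{(\gamma)}_j$ is the negative of a minimum of $O(m^2)$ functions that are linear in $\gamma$. These come from the shortest $r$-to-$j$ paths with at most four edges in $G^{(\gamma)}_{\bA^{(\ell)}}$, and the graph itself is fixed on $I_\ell$ by \Cref{lem:critical-values}. Hence every $p_j$ is piecewise linear with polynomially many breakpoints, and these breakpoints are computable as pairwise intersections of the candidate linear functions.

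Next I add the $O(m^2)$ points where two price functions cross. Between consecutive points of the combined breakpoint set, the price ordering is fixed, so the round-robin bundles $X_i,Y_{i'}$ are fixed, and $f$ and $g$ are linear on each subinterval. Because the number of subintervals is polynomial, I can evaluate $f$ and $g$ at every breakpoint. I then either find a breakpoint where both are nonnegative, or locate the first subinterval on which $g$ becomes nonnegative and solve the linear equation $g(\gamma)=0$ there. By the argument above, that root is a valid $\gamma^*$.

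The main obstacle is rigor in the continuity and piecewise-linear bookkeeping. Round-robin tie-breaking could change the bundles at a point where prices coincide. \Cref{lem:2types-continuous} already asserts that $f$ and $g$ stay continuous, so I only need to ensure the algorithm uses a consistent tie-breaking rule when evaluating at breakpoints. Any consistent rule gives the same values of $f$ and $g$ by that continuity, so the computed $\gamma^*$ is valid.
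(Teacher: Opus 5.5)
Your proof is correct and follows essentially the paper's route: both arguments combine the closedness of $\{f\ge 0\}$ and $\{g\ge 0\}$ (by \Cref{lem:2types-continuous}) with \Cref{lem:2types-conserved}, except that you take $\gamma^*=\inf\{\gamma\in I_\ell\mid g(\gamma)\ge 0\}$ where the paper takes the mirror-image choice $\gamma^*=\max\Gamma_1$. Your algorithmic part is the paper's breakpoint tracking of the piecewise-linear prices, made slightly more explicit by solving the linear equation $g(\gamma)=0$ on the first subinterval where $g$ becomes nonnegative.
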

\begin{appendixproof}[Proof of \Cref{lem:2types-case1}]
Define the sets
\begin{align}
\Gamma_1 &= \left\{ \gamma \in I_\ell \ \middle|\ p^{(\gamma)}(X^{(\ell,\gamma)}_{n_1}) \ge \hat{p}^{(\gamma)}(Y_1^{(\ell,\gamma)}) \right\},
\ \text{and}\\
\Gamma_2 &= \left\{ \gamma \in I_\ell \ \middle|\ p^{(\gamma)}(Y^{(\ell,\gamma)}_{n_2}) \ge \hat{p}^{(\gamma)}(X^{(\ell,\gamma)}_1) \right\}.
\end{align}
By assumption, $\gamma_{\ell-1} \in \Gamma_1$ and $\gamma_\ell \in \Gamma_2$.
Since the values $p^{(\gamma)}(X^{(\ell,\gamma)}_{n_1})$, $\hat{p}^{(\gamma)}(Y_1^{(\ell,\gamma)})$, $p^{(\gamma)}(Y^{(\ell,\gamma)}_{n_2})$, and $\hat{p}^{(\gamma)}(X^{(\gamma)}_1)$ are continuous in $\gamma$, both $\Gamma_1$ and $\Gamma_2$ are closed subsets of $I_\ell$.
Let $\gamma^* = \max \Gamma_1$. 
If $\gamma^*=\gamma_\ell$, then we have $\gamma^*\in\Gamma_1\cap\Gamma_2$ by $\gamma_\ell\in\Gamma_2$. 
If $\gamma^*<\gamma_\ell$, then we have $(\gamma^*, \gamma_\ell] \subseteq \Gamma_2$ by \Cref{lem:2types-conserved}.
Since $\Gamma_2$ is closed, we have $[\gamma^*,\gamma_\ell]\subseteq \Gamma_2$.
Thus, $\gamma^* \in \Gamma_1 \cap \Gamma_2$ and satisfies the desired conditions.

Finally, we show that such $\gamma^*$ can be efficiently found by tracking the changes in the prices as we increase $\gamma$ within the interval $I_\ell$.
By \Cref{lem:optimal-dual}, the dual variables $-\bq^{(\gamma)}$ and $\bp^{(\gamma)}$ correspond to shortest path lengths in the graph $G^{(\gamma)}_{\bA^{(\ell)}}$. 
Since all agents of the same type share the same $q_i$, there is a shortest path from the root $r$ to each node in $N \cup M$ that uses at most four edges. There are at most $O(m^2)$ such paths. The length of each path is a linear function of $\gamma$, so each $-q^{(\gamma)}_i$ and $p^{(\gamma)}_j$ can be represented as the minimum of $O(m^2)$ linear functions in $\gamma$. Thus, $-q^{(\gamma)}_i$ and $p^{(\gamma)}_j$ are piecewise linear functions with at most $O(m^2)$ segments.
Furthermore, the round-robin redistribution of goods among agents depends only on the price vector $\bp^{(\gamma)}$. The allocation changes only when the prices of two goods cross, which can happen at most once per segment for each pair of goods. Therefore, within the interval $I_\ell$, the total number of allocation changes is bounded by $O(m^2 n^2)$.
By enumerating all breakpoints where either the segments of potentials or the allocation changes, we can efficiently track the values of $p^{(\gamma)}(X^{(\ell,\gamma)}_{n_1})$ and $\hat{p}^{(\gamma)}(Y^{(\ell,\gamma)}_1)$ as we increase $\gamma$ within the interval $I_\ell$.
Therefore, we can compute the desired $\gamma^*$ in polynomial time.
\end{appendixproof}

We next show that, if we encounter Case 2, then we can find an EF1 and fPO balanced allocation by successively exchanging one good from a type-1 agent with one good from a type-2 agent.
Let $\ell$ be the index chosen in the algorithm. 
Since both $\bA^{(\ell)}$ and $\bA^{(\ell+1)}$ maximize $\sum_{i\in N_1}u_{1}(A_i)+\gamma_\ell\sum_{i\in N_2}u_{2}(A_i)$, we have $q^{(\gamma_\ell)}_i+p^{(\gamma_\ell)}_j=u_{1j}$ for all $i\in N_1$ and $j\in S^{(\ell)}\cup S^{(\ell+1)}$, and $q^{(\gamma_\ell)}_i+p^{(\gamma_\ell)}_j=\gamma_\ell u_{2j}$ for all $i\in N_2$ and $j\in T^{(\ell)}\cup T^{(\ell+1)}$.
We observe that $\hat{\bA}$ obtained during the process of Case 2 is always fPO.
\begin{lemma}\label{lem:2types-case2-fPO}
Fix $\ell\in[L+1]$ and let $(\bq,\bp)=(\bq^{(\gamma_\ell)},\bp^{(\gamma_\ell)})$.
Suppose $M$ is partitioned into $S$ and $T$ with $S\subseteq S^{(\ell)}\cup S^{(\ell+1)}$ and $T\subseteq T^{(\ell)}\cup T^{(\ell+1)}$.
Let $\hat{\bA}=(X_1,\dots,X_{n_1},Y_1,\dots,Y_{n_2})$ be the balanced allocation obtained by the round-robin procedure with respect to $\bp$, assigning goods in $S$ to type-1 agents and goods in $T$ to type-2 agents.
Then, $\hat{\bA}$ is fPO.
\end{lemma}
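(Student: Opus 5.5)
The plan is to show directly, via complementary slackness, that $\hat{\bA}$ is an optimal solution of LP~\eqref{eq:primal} for the weight vector $\balpha=(1,\dots,1,\gamma_\ell,\dots,\gamma_\ell)$. Since $\gamma_\ell>0$, \Cref{prop:fPO} then gives fPO. Take $(\bq,\bp)=(\bq^{(\gamma_\ell)},\bp^{(\gamma_\ell)})$, the potentials from \Cref{lem:optimal-dual} computed with respect to $\bA^{(\ell)}$. These are optimal for the dual~\eqref{eq:dual} at $\gamma=\gamma_\ell$, so in particular they are dual feasible.

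First I verify that both $\bA^{(\ell)}$ and $\bA^{(\ell+1)}$ are optimal at $\gamma_\ell$. By \Cref{lem:critical-values} and the definition of $\bA^{(\ell)}$, the allocation $\bA^{(\ell)}$ is optimal on all of $I_\ell$ and $\bA^{(\ell+1)}$ is optimal on all of $I_{\ell+1}$, and $\gamma_\ell$ lies in both intervals. Each of these allocations is an optimal primal solution and $(\bq,\bp)$ is an optimal dual solution. Moreover, by the last clause of \Cref{lem:optimal-dual}, the potentials do not depend on which optimal allocation is used. Complementary slackness applied to each optimal primal solution therefore gives tightness on its support:
\begin{align}
q_i+p_j=u_{1j}\quad(i\in N_1,\ j\in S^{(\ell)}\cup S^{(\ell+1)}),\qquad q_i+p_j=\gamma_\ell u_{2j}\quad(i\in N_2,\ j\in T^{(\ell)}\cup T^{(\ell+1)}).
\end{align}
One subtlety needs care here. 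Tightness is guaranteed only for the pair $(i,j)$ that actually carries the good $j$. It extends to every agent of the same type because all agents of a type share the same $q_i$, as noted before \Cref{lem:rr-price}, and share the same valuation and weight.

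Next I check complementary slackness for $\hat{\bA}$. Its support consists of pairs $(i,j)$ with $i\in N_1$ and $j\in S\subseteq S^{(\ell)}\cup S^{(\ell+1)}$, or with $i\in N_2$ and $j\in T\subseteq T^{(\ell)}\cup T^{(\ell+1)}$. By the display above, every such pair is tight. The allocation $\hat{\bA}$ is a feasible balanced allocation because round-robin among $n_1$ agents hands out exactly $k$ goods each from $S$. This requires $|S|=kn_1$, which holds because in Case 2 every step of the procedure swaps one good for another, so the sizes are preserved. I would state this size condition explicitly as an implicit hypothesis.

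Feasibility of $\hat{\bA}$, dual feasibility of $(\bq,\bp)$, and complementary slackness together imply that $\hat{\bA}$ is optimal for LP~\eqref{eq:primal}. Hence $\hat{\bA}$ maximizes $\sum_{i\in N_1}u_1(A_i)+\gamma_\ell\sum_{i\in N_2}u_2(A_i)$ with positive weights, and by \Cref{prop:fPO} it is fPO. The main obstacle is the justification of the displayed tightness for both allocations with one common dual. That step rests entirely on the independence statement in \Cref{lem:optimal-dual} together with the equal-$q$ argument within each type; everything after it is routine.
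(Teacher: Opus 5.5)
Your proof is correct and follows the paper's argument. Tightness $q_i+p_j=u_{1j}$ (resp.\ $q_i+p_j=\gamma_\ell u_{2j}$) holds on $S^{(\ell)}\cup S^{(\ell+1)}$ (resp.\ $T^{(\ell)}\cup T^{(\ell+1)}$) for every agent of the type, and then complementary slackness for $\hat{\bA}$ and \Cref{prop:fPO} finish; your extra details (equal $q_i$ within a type, $|S|=kn_1$) fill in what the paper leaves implicit.

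One small correction to your self-assessment: tightness on both supports already follows from complementary slackness between any primal-optimal and any dual-optimal solution. So the independence clause of \Cref{lem:optimal-dual} is only needed to make $\bp^{(\gamma_\ell)}$ well defined, not to obtain tightness.
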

\begin{appendixproof}[Proof of \Cref{lem:2types-case2-fPO}]
For each $i\in N_1$, we have $q_i+p_j=u_{1j}$ for all $j\in X_i\subseteq S^{(\ell)}\cup S^{(\ell+1)}$.
Similarly, for each $i\in N_2$, we have $q_{i}+p_j=\gamma_\ell u_{2j}$ for all $j\in Y_{i}\subseteq T^{(\ell)}\cup T^{(\ell+1)}$.
Thus, by the complementary slackness condition, $\hat{\bA}$ maximizes $\sum_{i\in N_1}u_{1}(A_i)+\gamma_\ell\sum_{i\in N_2}u_{2}(A_i)$. Hence, $\hat{\bA}$ is fPO.
\end{appendixproof}

Next, we show that if $\hat{\bA}$ does not satisfy condition (b) at some step, then $\hat{\bA}$ at the next step satisfies condition (a).
\begin{lemma}\label{lem:2types-case2-p}
Let $(S,T)$ be a partition of $M$ and $\bp\in\mathbb{R}_+^N$.
Let $\hat{\bA}=(X_1,\dots,X_{n_1},Y_1,\dots,Y_{n_2})$ be the balanced allocation obtained by the round-robin procedure with respect to $\bp$, assigning goods in $S$ to type-1 agents and goods in $T$ to type-2 agents.
Suppose $j_1\in S$ and $j_2\in T$, and consider the new partition $(S',T')=(S\setminus\{j_1\}\cup\{j_2\},\, T\setminus\{j_2\}\cup\{j_1\})$, and let $\hat{\bA}'=(X'_1,\dots,X'_{n_1},Y'_1,\dots,Y'_{n_2})$ be the balanced allocation obtained by round-robin with respect to $(S',T')$ and $\bp$.
Then, if $p(Y_{n_2})<\hat{p}(X_1)$, we have $p(X'_{n_1})\ge \hat{p}(Y'_1)$.
\end{lemma}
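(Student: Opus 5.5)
The plan is to make the round-robin structure explicit, as in the proof of \Cref{lem:rr-price}, and then compare order statistics before and after the swap. Write the goods of $S$ in nonincreasing price order $p_{s_1}\ge \dots\ge p_{s_{kn_1}}$, and those of $T$ as $p_{t_1}\ge\dots\ge p_{t_{kn_2}}$. Write $\sigma_r\coloneqq p_{s_r}$ and $\tau_r\coloneqq p_{t_r}$ for the $r$th largest price in $S$ and in $T$. For the new sets $S',T'$ let $\sigma'_r,\tau'_r$ be the corresponding order statistics. We adopt the convention $\sigma_r=0$ for $r>kn_1$, which is harmless since $\bp\ge 0$. The computation in the proof of \Cref{lem:rr-price} gives
\begin{align}
p(X_{n_1})=\sum_{\iota=1}^{k}\sigma_{\iota n_1},\quad \hat p(X_1)=\sum_{\iota=1}^{k-1}\sigma_{\iota n_1+1},\quad p(Y_{n_2})=\sum_{\iota=1}^{k}\tau_{\iota n_2},\quad \hat p(Y_1)=\sum_{\iota=1}^{k-1}\tau_{\iota n_2+1},
\end{align}
and the same formulas hold for the primed allocation with $\sigma',\tau'$.

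The key step is an interlacing inequality for a one-element exchange. Removing one element from a multiset of reals can only lower each order statistic, and the new $r$th largest is at least the old $(r+1)$st largest. Adding one element can only raise each order statistic, and the new $r$th largest is at most the old $(r-1)$st largest.

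For $S'=S\setminus\{j_1\}\cup\{j_2\}$ this gives $\sigma'_r\ge \sigma_{r+1}$ for all $r$: the removal loses at most one rank and the addition never decreases anything. For $T'=T\setminus\{j_2\}\cup\{j_1\}$ it gives $\tau'_r\le\tau_{r-1}$ for all $r\ge 2$: the removal never increases anything and the addition gains at most one rank. I would verify these two inequalities carefully, handling ties by working with sorted price lists rather than with goods. This is the only place where some care is needed.

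Chaining everything together, with $\iota n_1+1$ possibly exceeding $kn_1$ when $\iota=k$ (then that term is $0$), we get
\begin{align}
p(X'_{n_1})=\sum_{\iota=1}^{k}\sigma'_{\iota n_1}
&\ge \sum_{\iota=1}^{k}\sigma_{\iota n_1+1}
\ge\sum_{\iota=1}^{k-1}\sigma_{\iota n_1+1}=\hat p(X_1)\\
&> p(Y_{n_2})=\sum_{\iota=1}^{k}\tau_{\iota n_2}
\ge \sum_{\iota=1}^{k-1}\tau_{\iota n_2}
\ge \sum_{\iota=1}^{k-1}\tau'_{\iota n_2+1}=\hat p(Y'_1).
\end{align}
The strict inequality in the middle is the hypothesis $p(Y_{n_2})<\hat p(X_1)$. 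The two inequalities that drop a term use nonnegativity of prices. The remaining two inequalities are exactly the interlacing bounds, applied with $r=\iota n_1$ and $r=\iota n_2+1\ge 2$ respectively. This yields $p(X'_{n_1})\ge\hat p(Y'_1)$, which is the claim. I expect no serious obstacle beyond stating the interlacing lemma cleanly.
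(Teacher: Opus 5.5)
Your proposal is correct and follows the paper's proof: both derive $p(X'_{n_1})\ge\hat p(X_1)$ and $p(Y_{n_2})\ge\hat p(Y'_1)$ from the one-element-exchange interlacing of the sorted price lists ($\sigma'_r\ge\sigma_{r+1}$, $\tau'_r\le\tau_{r-1}$) together with nonnegativity of prices, and then sandwich the hypothesis $p(Y_{n_2})<\hat p(X_1)$ between them. The only cosmetic difference is that you zero-pad $\sigma_{kn_1+1}$, whereas the paper drops the $k$th summand before applying the interlacing bound.
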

\begin{appendixproof}[Proof of \Cref{lem:2types-case2-p}]
We first prove that 
\begin{align}
p(X'_{n_1})\ge \hat{p}(X_1).\label{eq:case2-p1}
\end{align}
Let $S=\{s_1,s_2,\dots,s_{kn_1}\}$ and $S'=\{s'_1,s'_2,\dots,s'_{kn_1}\}$
where $p_{s_1}\ge p_{s_2}\ge \dots\ge p_{s_{kn_1}}$ and $p_{s'_1}\ge p_{s'_2}\ge \dots\ge p_{s'_{kn_1}}$.
Then, we have $\hat{p}(X_1)=\sum_{\iota=1}^{k-1} p_{s_{\iota n_1+1}}$ and $p(X'_{n_1})=\sum_{\iota=1}^{k} p_{s'_{\iota n_1}}$.
Since $S'=S\setminus\{j_1\}\cup\{j_2\}$, we have $p_{s'_\eta}\ge p_{s_{\eta+1}}$ for all $\eta\in [kn_1-1]$. Hence, we obtain
\begin{align}
p(X'_{n_1})
=\sum_{\iota=1}^{k} p_{s'_{\iota n_1}}
\ge \sum_{\iota=1}^{k-1} p_{s'_{\iota n_1}}
\ge \sum_{\iota=1}^{k-1} p_{s_{\iota n_1+1}}
=\hat{p}(X_1).
\end{align}

Next, we prove that
\begin{align}
p(Y_{n_2})\ge \hat{p}(Y'_1).\label{eq:case2-p2}
\end{align}
Let $T=\{t_1,t_2,\dots,t_{kn_2}\}$ and $T'=\{t'_1,t'_2,\dots,t'_{kn_2}\}$
where $p_{t_1}\ge p_{t_2}\ge \dots\ge p_{t_{kn_2}}$ and $p_{t'_1}\ge p_{t'_2}\ge \dots\ge p_{t'_{kn_2}}$.
Then, we have $\hat{p}(Y_{n_2})=\sum_{\iota=1}^{k} p_{t_{\iota n_2}}$ and $\hat{p}(Y'_{1})=\sum_{\iota=1}^{k-1} p_{s'_{\iota n_2+1}}$.
Since $T'=T\setminus\{j_2\}\cup\{j_1\}$, we have $p_{t_\eta}\ge p_{t'_{\eta+1}}$ for all $\eta\in [kn_2-1]$. Hence, we obtain
\begin{align}
p(Y_{n_2})
=\sum_{\iota=1}^{k} p_{t_{\iota n_2}}
\ge \sum_{\iota=1}^{k-1} p_{t_{\iota n_2}}
\ge \sum_{\iota=1}^{k-1} p_{t'_{\iota n_2+1}}
=\hat{p}(Y'_1).
\end{align}

By combining \eqref{eq:case2-p1} and \eqref{eq:case2-p2}, we have
\begin{align}
\hat{p}(Y'_1)\le p(Y_{n_2})<\hat{p}(X_1)\le p(X'_{n_1}),
\end{align}
if $p(Y_{n_2})<\hat{p}(X_1)$.
Therefore, $p(Y_{n_2})<\hat{p}(X_1)$ implies $p(X'_{n_1})\ge \hat{p}(Y'_1)$.
\end{appendixproof}
This lemma implies that if we encounter Case 2, we can find an EF1 and fPO balanced allocation by successively exchanging goods between type-1 and type-2 agents until conditions (a) and (b) are satisfied.
This is because $\hat{\bA}=\hat{\bA}^{(\ell,\gamma_\ell)}$ at the first step satisfies condition (a), and at the end we will have $\hat{\bA}=\hat{\bA}^{(\ell+1,\gamma_\ell)}$ and it satisfies condition (b). If an allocation satisfies both conditions (a) and (b), then it is EF1 by \Cref{lem:2types-pEF1}.

Now, we are ready to prove \Cref{thm:2types}.
\begin{proof}[Proof of \Cref{thm:2types}]
If $\hat{\bA}^{(\ell,\gamma)}$ is EF1 for some $\ell\in[L+1]$ and $\gamma\in\{\gamma_{\ell-1},\gamma_\ell\}$, then the algorithm returns it as an EF1 and fPO balanced allocation. 
Otherwise, the algorithm finds an EF1 and fPO balanced allocation via Case 1 or Case 2, depending on which situation occurs.
For Case 1, it finds $\gamma^*\in I_\ell$ such that $\hat{\bA}^{(\ell,\gamma^*)}$ satisfies both conditions (a) and (b), and returns it as an EF1 and fPO balanced allocation by \Cref{lem:2types-case1}.
For Case 2, it finds an EF1 and fPO balanced allocation by \Cref{lem:2types-case2-fPO} and \Cref{lem:2types-case2-p}.
Thus, the algorithm always finds an EF1 and fPO balanced allocation.

For each $\ell\in[L+1]$, the allocation $\bA^{(\ell)}$ can be computed in polynomial time by the Hungarian algorithm. 
For each $\ell\in[L+1]$ and $\gamma\in I_\ell$, we can compute the potential $(\bq^{(\gamma)},\bp^{(\gamma)})$ in polynomial time by the Bellman--Ford algorithm by \Cref{lem:optimal-dual}.
We can then compute the allocation $\hat{\bA}^{(\ell,\gamma)}$ in polynomial time since the round-robin redistribution of goods takes $O(m)$ time. 
Since $L=O(m^2)$ and the critical values can be computed in $O(m^2)$ time, we can check whether $\hat{\bA}^{(\ell,\gamma)}$ is EF1 for each $\ell\in[L+1]$ and $\gamma\in\{\gamma_{\ell-1},\gamma_\ell\}$ in polynomial time.
For Case 1, we can find a desired $\gamma$ in polynomial time by \Cref{lem:2types-case1}. Hence, the total time complexity for Case 1 is polynomial.
For Case 2, we can find an EF1 and fPO balanced allocation in polynomial time since we only need to perform a sequence of single-good exchanges, and the number of such exchanges is at most $O(m)$.
\end{proof}

\section{Conclusion}
In this paper, we addressed the problem of finding a fair and efficient allocation of indivisible goods under the balanced constraint, where each agent receives the same number of goods. 
Our main contribution is to affirmatively resolve the existence and polynomial-time computability of allocations that are simultaneously EF1 and fPO in two important cases: when agents have personalized bivalued valuations and when there are at most two types of agents. We developed novel polynomial-time algorithms for both scenarios, leveraging techniques from maximum-weight perfect matching.

A natural next step is to extend our results beyond two types of agents to the general case of three or more types, but this becomes much more complex.


\section{Acknowledgments}
This work was partially supported by the joint project of Kyoto University and Toyota Motor Corporation, titled ``Advanced Mathematical Science for Mobility Society'' and supported by JST ERATO Grant Number JPMJER2301, JST PRESTO Grant Number JPMJPR2122, 
JSPS KAKENHI Grant Numbers JP23K19956 and JP25K00137, Japan.

\bibliographystyle{plain}
\bibliography{fair}


\begin{fullpaper}
\clearpage
\appendix
\section{Omitted Proof}
    
\begin{proof}[Proof of \Cref{lem:optimal-dual}]
We remark that the weighted bipartite graph $G^{(\balpha)}_{\bA}$ does not contain a negative directed cycle, since otherwise we can construct a balanced allocation $\bA'$ with $\sum_{i\in N}\alpha_iv_i(A'_i)>\sum_{i\in N}\alpha_iv_i(A_i)$ by transferring goods along the cycle.

Let $q_i$ be the shortest length of $r$--$i$ path for each $i\in N$ and $-p_j$ as the shortest length of $r$--$j$ path for each $j\in M$ in $G^{(\balpha)}_{\bA}$.
Then, for each $(i,j)\in N\times M$, we have $-p_j\le q_i-\alpha_iv_{ij}$, since there is a path from $r$ to $j$ that follows a shortest path from $r$ to $i$, and then use the edge $(i,j)$ with weight $-\alpha_iv_{ij}$.
Thus, the solution $(\bq,\bp)$ is feasible to \eqref{eq:dual}.
In addition, for each $(i,j)\in N\times M$ with $j\in A_i$, we have $q_i\le -p_j+\alpha_iv_{ij}$, since there is a path from $r$ to $j$ that follows a shortest path from $r$ to $i$, and then use the edge $(j,i)$ with weight $\alpha_iv_{ij}$.
This implies that $q_i+p_j=\alpha_iv_{ij}$ for all $(i,j)\in N\times M$ with $j\in A_i$.
Thus, $(\bq,\bp)$ is optimal solution of \eqref{eq:dual} by the complementary slackness.

Next, we observe that the solution $(\bq,\bp)$ is nonnegative.
For each $i\in N$, the value $q_i$ is nonnegative; otherwise, we can construct a negative cycle as follows: along a shortest path from $r$ to $i$, consider the portion of the path after the first vertex $j\in M$ following $r$, and combine this with the edge $(i,j)$. This contradicts the fact that there is no negative directed cycle in $G^{(\balpha)}_{\bA}$.
For each $j\in M$, the value $-p_j$ is nonpositive (and hence $p_j$ is nonnegative) since the path $(r,j)$ is of length $0$ and any shortest $r$--$j$ path is not longer than this.

Finally, suppose that $\bA'$ is a balanced allocation that also maximizes the weighted sum, i.e., $\sum_{i\in N}\alpha_iv_i(A'_i)=\sum_{i\in N}\alpha_iv_i(A_i)$.
Then, for each $(i,j)\in N\times M$ with $j\in A_i'$, we have $p_i+q_j=\alpha_iv_{ij}$, since otherwise
\begin{align}
\sum_{i\in N}\alpha_iv_i(A_i)
&=\sum_{j\in M}(q_{\bA(j)}+p_j)
=k\cdot\sum_{i\in N}q_i+\sum_{j\in M}p_j\\
&>\sum_{j\in M}(q_{\bA'(j)}+p_j)
=\sum_{i\in N}\alpha_iv_i(A'_i),
\end{align}
which is a contradiction.
Here, we write $\bA(j)$ and $\bA'(j)$ to denote the agent who receives $j$ in $\bA$ and $\bA'$, respectively.
Thus, for each $(i,j)\in N\times M$ with $j\in A_i$, there is a zero-cycle in $G^{(\balpha)}_{\bA}$ that contains $(i,j)$ and alternately uses an edge of the form $(j',\bA'(j'))$ and one of the form $(\bA(j'),j')$.
This implies that there is a path in $G^{(\balpha)}_{\bA'}$ from $i$ to $j$ of length at most $-\alpha_iv_{ij}$.
Hence, the shortest distance from $r$ to each vertex in $G^{(\balpha)}_{\bA'}$ is never longer than the corresponding shortest distance in $G^{(\balpha)}_{\bA}$.
By symmetry, an analogous statement holds for $G^{(\balpha)}_{\bA}$ relative to $G^{(\balpha)}_{\bA'}$ as well.
Therefore, we can conclude that $(\bq,\bp)=(\bq',\bp')$.
\end{proof}

\end{fullpaper}


\end{document}